\renewcommand{\orcidID}[1]{\orcidlink{#1}}
\newcommand{\ori}{\vv} 
\spnewtheorem{myclaim}[theorem]{Claim}{\itshape}{\upshape}
\crefname{myclaim}{Claim}{Claims}
\spnewtheorem{observation}[theorem]{Observation}{\bfseries}{\upshape}
\crefname{observation}{Observation}{Observations}
\definecolor{darkred}{rgb}{0.7,0,0}
\newcommand{\gpara}{\mathsf}
\DeclareMathOperator{\bn}{\gpara{b}} 
\DeclareMathOperator{\obn}{\gpara{B}} 
\DeclareMathOperator{\cc}{\theta} 
\DeclareMathOperator{\mn}{\mu} 
\DeclareMathOperator{\odn}{\gpara{DOM}} 
\DeclareMathOperator{\cvd}{\gpara{cvd}} 
\DeclareMathOperator{\vc}{\gpara{vc}} 
\newcommand{\KE}{K\H{o}nig--Egerv\'{a}ry}
\let\doendproof\endproof
\renewcommand\endproof{~\hfill$\qed$\doendproof}
\newenvironment{subproof}[1][\proofname]{%
  \renewcommand\endproof{~\hfill$\lozenge$\doendproof}
  \begin{proof}[#1]%
}{%
  \end{proof}%
}
\begin{document}
%
\title{Orientable Burning Number of Graphs\thanks{%
Partially supported
by JSPS KAKENHI Grant Numbers
JP18H04091, 
JP20H05793, 
JP21K11752, 
JP22H00513, 
JP23KJ1066. 
}}
%
%
\author{
Julien Courtiel\inst{1}\orcidID{0000-0002-3441-2818} \and
Paul Dorbec\inst{1}\orcidID{0009-0007-1179-6082} \and
Tatsuya Gima\inst{2,3}\orcidID{0000-0003-2815-5699} \and
Romain Lecoq\inst{1} \and
Yota Otachi\inst{2}\orcidID{0000-0002-0087-853X}
}

\authorrunning{J. Courtiel et al.}
%
\institute{%
Normandie Univ, UNICAEN, ENSICAEN, CNRS, GREYC, 14000 Caen, France \email{julien.courtiel@unicaen.fr},
 \email{paul.dorbec@unicaen.fr},
 \email{romain.lecoq@unicaen.fr}
\and
Nagoya University, Nagoya, Japan.
 \email{gima@nagoya-u.jp},
 \email{otachi@nagoya-u.jp}
\and
JSPS Research Fellow
}
\maketitle              
\begin{abstract}
In this paper, we introduce the problem of finding an orientation of a given undirected graph that maximizes the burning number of the resulting directed graph.
We show that the problem is polynomial-time solvable on {\KE} graphs (and thus on bipartite graphs)
and that an almost optimal solution can be computed in polynomial time for perfect graphs.
On the other hand, we show that the problem is NP-hard in general and W[1]-hard parameterized by the target burning number.
The hardness results are complemented by several fixed-parameter tractable results parameterized by structural parameters.
Our main result in this direction shows that the problem is fixed-parameter tractable parameterized by cluster vertex deletion number plus clique number
(and thus also by vertex cover number).

\keywords{Burning number \and Graph orientation \and Fixed-parameter algorithm.}
\end{abstract}


\section{Introduction}
The \emph{burning number} of a directed or undirected graph $G$, denoted $\bn(G)$,
is the minimum number of steps for burning all vertices of $G$ in the following way:
in each step, we pick one vertex and burn it; and then
between any two consecutive steps, the fire spreads to the neighbors (to the out-neighbors, in the directed setting) of the already burnt vertices.
In other words, $\bn(G)$ is the minimum integer $b$ such that there exists a sequence $\langle w_{0}, \dots, w_{b-1}\rangle$ of vertices
such that for each vertex $v$ of $G$, there exists $i$ ($0\le i \le b-1$) such that the distance from $w_{i}$ to $v$ is at most $i$.
Note that each $w_{i}$ corresponds to the vertex that we picked in the $(b-i)$th step.

The concept of burning number is introduced by Bonato, Janssen, and Roshanbin~\cite{BonatoJR14,BonatoJR16} as a model of information spreading,
while the same concept was studied already in 1992 by Alon~\cite{Alon92}.
The central question studied so far on this topic is the so-called \emph{burning number conjecture},
which is about the worst case for a burning process and
states that $\bn(G) \le \lceil \sqrt{n} \rceil$ for every connected undirected graph with $n$ vertices.
The conjecture has been studied intensively but it is still open (see \cite{Bonato20} and the references therein).
Recently, it has been announced that the conjecture holds asymptotically, that is, $\bn(G) \le (1 + o(1)) \sqrt{n}$~\cite{NorinT22arxiv}.
For the directed case, the worst cases are completely understood in both weakly and strongly connected settings~\cite{Janssen20}.
Since the problem of computing the burning number is hard~\cite{BessyBJRR17,LiuHH20a,MondalRPR22},
several approximation algorithms~\cite{LieskovskyS22,LieskovskySF23,Martinsson23arxiv}
and parameterized algorithms~\cite{AshokDKSTV23,KareR19,KobayashiO22} are studied.

In this paper, we investigate the worst case for a directed graph in the setting where we only know the underlying undirected graph.
That is, given an undirected graph, which is assumed to be the underlying graph of a directed graph,
we want to know how bad the original directed graph can be in terms of burning number.
This concept is represented by the following new graph parameter:
the \emph{orientable burning number} of an undirected graph $G$, denoted $\obn(G)$, is the maximum burning number over all orientations of $G$;
that is,
\[
  \obn(G) = \max_{\text{orientation $\ori{G}$ of $G$}} \bn(\ori{G}).
\]
Recall that an orientation $\ori{G}$ of an undirected graph $G$ is a directed graph that gives exactly one direction to each edge of $G$.
Now the main problem studied in this paper is formalized as follows.
\begin{tcolorbox}
\begin{description}
  \item[Problem:] \textsc{Orientable Burning Number} (OBN)
  \item[Input:] An undirected graph $G = (V, E)$ and an integer $b$.
  \item[Question:] Is $\obn(G) \ge b$?
\end{description}
\end{tcolorbox}

In the setting of information spreading applications,
this new problem can be seen as the one determining directions of links in a given underlying network structure
to make the spread of something bad as slow as possible.
Note that the dual problem of minimizing the burning number by an orientation
is equivalent to the original graph burning problem on undirected graphs since
$\bn(G) = \min_{\text{orientation $\ori{G}$ of $G$}} \bn(\ori{G})$.
To see this equality, observe that each edge is used at most once and only in one direction to spread the fire.

See \cref{fig:star} for an example on the star graph $K_{1,n}$.
\begin{figure}[htb]
\centering
\includegraphics{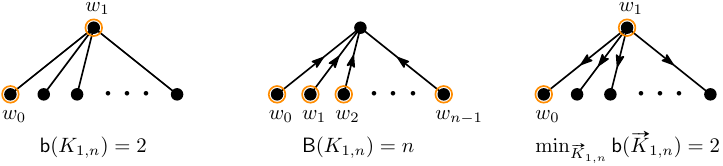}
\caption{The star graph $K_{1,n}$ with $n \ge 2$.}
\label{fig:star}
\end{figure}

\subsection{Our results}

We first present, in \cref{sec:bounds}, several lower and upper bounds
connecting the orientable burning number of a graph with other parameters such as the independence number.
In particular, for perfect graphs, we present almost tight lower and upper bounds that differ only by~$2$
and can be computed in polynomial time.
We also consider {\KE} graphs, which generalize bipartite graphs.
Although our bounds for them are not exact,
we show that the orientable burning number of a {\KE} graph can be computed in polynomial time (see \cref{sec:KE-graph}).

Next we consider the computational intractability of OBN in \cref{sec:hardness}.
We first show that OBN is W[1]-hard parameterized by the target burning number~$b$.
Although the proof of this result implies the NP-hardness of OBN for general graphs as well,
we present another NP-hardness proof that can be applied to restricted graph classes that satisfy a couple of conditions.
For example, this shows that OBN is NP-hard on planar graphs of maximum degree~$3$.

To cope with the hardness of OBN, we study structural parameterizations in \cref{sec:parameterizations}.
We first observe that some sparseness parameters (e.g., treewidth) combined with $b$ make the problem fixed-parameter tractable.
The main question there is the tractability of structural parameterizations \emph{not} combined with~$b$.
We show that OBN parameterized by cluster vertex deletion number plus clique number is fixed-parameter tractable.
As a corollary to this result, we can see that OBN parameterized by vertex cover number is fixed-parameter tractable.
We believe that the techniques used there would be useful for studying other structural parameterizations of OBN as well.

\subsection{Related problems}

Although the problem studied in this paper is new,
the concept of orientable number has long history in the settings of some classical graph problems.

The most relevant is the orientable domination number.
The \emph{orientable domination number} of an undirected graph $G$, denoted $\odn(G)$, is the maximum domination number over all orientations $\ori{G}$ of $G$.
That is,
\[
 \odn(G) = \max_{\text{orientation $\ori{G}$ of $G$}} \gamma(\ori{G}),
\]
where $\gamma(\ori{G})$ is the size of a minimum dominating set of the directed graph $\ori{G}$.\footnote{%
In a directed graph, a vertex dominates itself and its out-neighbors.}
Erd\H{o}s~\cite{Erdos63tournament} initiated (under a different formulation) the study of orientable domination number
by showing that $\odn(K_{n}) \simeq \log_{2} n$, where $K_{n}$ is the complete graph on $n$ vertices.
Later, the concept of orientable domination number is explicitly introduced by Chartrand et al.~\cite{ChartrandVY96}.
We can show that orientable domination number (plus~$1$) is an upper bound of orientable burning number
(see \cref{obs:obn_le_ods+1}).

There are two other well-studied problems.
One is to find an orientation that minimizes the length of a longest path,
which is equivalent to the graph coloring problem by the the Gallai--Hasse--Roy--Vitaver theorem~\cite{Gallai68,Hasse65,Roy67,Vitaver62}. %
The other one is to find a strong orientation that minimizes or maximizes the diameter.
It is NP-complete to decide if a graph admits a strong orientation with diameter~2~\cite{ChvatalT78}
and the maximum diameter of a strong orientation is equal to the length of a longest path in the underlying 2-edge connected  graph~\cite{Gutin94}.


\section{Preliminaries}
\label{sec:pre}

\paragraph{Terms in graph burning.}
Let $D = (V,A)$ be a directed graph.
By $N_{\ell,D}^{+}[v]$, we denote the set of vertices with distance at most $\ell$ from $v$ in $D$.
We often omit $D$ in the subscript and write $N^{+}_{\ell}[v]$ instead  when it is clear from the context.
A \emph{burning sequence} of $D$ with length $b$ is
a sequence $\langle w_{0}, w_{1}, \dots, w_{b-1}\rangle \in V^{b}$
such that $\bigcup_{0 \le i \le b-1} N_{i}^{+}[w_{i}] = V$.
Note that the burning number of $D$ is the minimum integer $b$ such that $D$ has a burning sequence of length $b$.
We call the $i$th vertex $w_{i}$ in a burning sequence the \emph{fire} of \emph{radius} $i$
and say that $w_{i}$ \emph{burns} the vertices in $N_{i}^{+}[w_{i}]$.

\paragraph{Some basic graph terms.}
Let $G = (V,E)$ be a graph.
The \emph{complement} of $G$, denoted $\overline{G} = (V, \overline{E})$, is the graph
in which two vertices are adjacent if and only if they are not adjacent in $G$.
For $S \subseteq V$, let $G[S]$ denote the subgraph of $G$ induced by $S$.
For $S \subseteq V$, let $G - S = G[V \setminus S]$.
A vertex set $S \subseteq V$ is an \emph{independent set} in $G$ if $G[S]$ contains no edge.
The \emph{independence number} of $G$, denoted $\alpha(G)$, is the maximum size of an independent set in $G$.
The \emph{chromatic number} of $G$, denoted $\chi(G)$, is the minimum integer $c$ such that
the vertices of $G$ can be colored with $c$ colors in such a way that no two adjacent vertices have the same color.
In other words, $\chi(G)$ is the minimum integer $c$ such that $V$ can be partitioned into $c$ independent sets.
A vertex set $S \subseteq V$ is a \emph{clique} in $G$ if $G[S]$ is a complete graph.
The \emph{clique number} of $G$, denoted $\omega(G)$, is the maximum size of a clique in $G$.
The \emph{clique cover number} of $G$, denoted $\cc(G)$, is the minimum integer $t$ such that $V$ can be partitioned into $t$ cliques.
An edge set $M \subseteq E$ is a \emph{matching} in $G$ if no two edges in $M$ share an endpoint.
The \emph{matching number} of $G$, denoted $\mn(G)$, is the maximum size of a matching in $G$.
Note that $\alpha(G) = \omega(\overline{G})$ and $\chi(G) = \cc(\overline{G})$ hold for every graph $G$.

A graph $G = (V,E)$ is a \emph{perfect graph} if $\omega(G[S]) = \chi(G[S])$ holds for all $S \subseteq V$.
Equivalently, $G$ is a perfect graph if $\alpha(G[S]) = \cc(G[S])$ holds for all $S \subseteq V$
since the class of perfect graphs is closed under taking complements~\cite{Lovasz72}.
The class of perfect graphs contains several well-studied classes of graphs such as bipartite graphs and chordal graphs (see, e.g., \cite{Bonomo-Braberman20}).
A graph $G = (V,E)$ is a \emph{{\KE} graph} if $\alpha(G) = |V| - \mn(G)$.
It is known that every bipartite graph is a {\KE} graph~\cite{Egervary1931,Konig1931}.

\paragraph{Structural parameters of graphs.}
Let $G = (V,E)$ be a graph.
A \emph{vertex cover} of $G$ is a set $S \subseteq V$ such that $G - S$ has no edge.
The \emph{vertex cover number} of $G$, denoted $\vc(G)$, is the minimum size of a vertex cover of $G$.
A \emph{cluster vertex deletion set} of $G$ is a set $S \subseteq V$ such that each connected component of $G - S$ is a complete graph.
The \emph{cluster vertex deletion number} of $G$, denoted $\cvd(G)$, is the minimum size of a cluster vertex deletion set of $G$.
In \cref{sec:parameterizations}, we use $\cvd(G) + \omega(G)$ as a parameter.
This combined parameter can be related to other parameters as follows.
\begin{observation}
\label{obs:paras_ineq}
For every graph $G$, the following inequalities hold:
\[
  \chi(G) \le \cvd(G) + \omega(G) \le 2 \vc(G) + 1.
\]
\end{observation}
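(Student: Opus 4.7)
The plan is to prove the two inequalities separately, each by a short constructive argument; neither requires a nontrivial obstacle.

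For the first inequality $\chi(G) \le \cvd(G) + \omega(G)$, I would let $S$ be a minimum cluster vertex deletion set, so $|S| = \cvd(G)$ and $G - S$ is a disjoint union of cliques, each of size at most $\omega(G)$. A common palette of $\omega(G)$ colors properly colors $G - S$ (color the vertices inside each clique with distinct colors from the palette; the palette can be reused across cliques because distinct cliques in $G-S$ share no edges). Then assign each of the $\cvd(G)$ vertices of $S$ its own fresh color. This yields a proper coloring of $G$ using $\omega(G) + \cvd(G)$ colors.

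For the second inequality $\cvd(G) + \omega(G) \le 2\vc(G) + 1$, I would argue two bounds separately. First, any vertex cover $C$ is itself a cluster vertex deletion set, because $G - C$ has no edges and every connected component of $G - C$ is a single vertex (hence a $K_{1}$); this yields $\cvd(G) \le \vc(G)$. Second, any vertex cover must include all but at most one vertex of a maximum clique (otherwise an edge inside the clique would be left uncovered), so $\omega(G) \le \vc(G) + 1$. Adding these two inequalities gives the desired bound.

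Combining the two arguments establishes the chain $\chi(G) \le \cvd(G) + \omega(G) \le 2\vc(G) + 1$. There is no real obstacle — the proof is a direct application of the definitions together with the elementary fact that a clique of size $k$ forces any vertex cover to have at least $k-1$ of its vertices.
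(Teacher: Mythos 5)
Your proof is correct and follows essentially the same approach as the paper: color the cluster vertex deletion set with fresh colors and each clique component with a reusable palette of $\omega(G)$ colors for the first inequality, and observe that a vertex cover is a cluster vertex deletion set and that a clique leaves at most one vertex outside any vertex cover for the second.
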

\begin{proof}
Let $k = \cvd(G) + \omega(G)$.
To see that $\chi(G) \le k$,
one can construct a $k$-coloring by first coloring a cluster vertex deletion set $S$ of size $\cvd(G)$ with $\cvd(G)$ colors,
and then coloring each connected component of $G-S$ independently using at most $\omega(G)$ colors not used in $S$.
For $k \le 2 \vc(G) + 1$, observe that a vertex cover is a cluster vertex deletion set
and that the clique number cannot be larger than the vertex cover number plus~$1$.
\end{proof}
We can see that $\cvd(G) + \omega(G)$ is an upper bound of vertex integrity,
and thus of treedepth, pathwidth, treewidth, and clique-width.
We are not going to define these parameters as we do not explicitly use them in this paper.
For their definitions, see \cite{GimaHKKO22,HlinenyOSG08}.

We assume that the readers are familiar with the theory of parameterized algorithms.
(For standard concepts, see \cite{CyganFKLMPPS15,DowneyF13,FlumG06,Niedermeier06}.)
Recall that a parameterized problem with input size $n$ and parameter $k$ is \emph{fixed-parameter tractable}
if there is a computable function $f$ and a constant $c$ such that the problem can be solved in time $O(f(k) \cdot n^{c})$,
while being W[1]-hard means that the problem is unlikely to be fixed-parameter tractable.


\section{General lower and upper bounds}
\label{sec:bounds}

In this section, we present lower and upper bounds of orientable burning number,
which are useful in presenting algorithmic and computational results in the rest of the paper.
We believe that the bounds would be of independent interest as well.

We start with a simple observation that orientable burning number
is bounded from above by orientable domination number plus~$1$.
\begin{observation}
\label{obs:obn_le_ods+1}
$\obn(G) \le \odn(G) + 1$ for every graph $G$.
\end{observation}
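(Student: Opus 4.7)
The plan is to prove the stronger orientation-wise inequality
\begin{equation*}
\bn(\ori{G}) \le \gamma(\ori{G}) + 1
\end{equation*}
for every orientation $\ori{G}$ of $G$; taking the maximum of both sides over all orientations of $G$ then yields $\obn(G) \le \odn(G) + 1$ immediately.

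To prove this pointwise bound, I would fix an arbitrary orientation $D = \ori{G}$, let $\gamma = \gamma(D)$, and let $S = \{s_{1}, \dots, s_{\gamma}\}$ be a minimum dominating set of $D$. The idea is to reuse $S$ as the set of fires of positive radius in a burning sequence of length $\gamma + 1$: pick any vertex $w_{0} \in V$ and set $w_{i} = s_{i}$ for $1 \le i \le \gamma$.

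Verifying $\bigcup_{0 \le i \le \gamma} N_{i}^{+}[w_{i}] = V$ is then a routine two-case check. If $v \in S$, say $v = s_{i}$, then $v = w_{i}$ lies in $N_{0}^{+}[w_{i}] \subseteq N_{i}^{+}[w_{i}]$ since $i \ge 1$. Otherwise $v \notin S$, and by the directed definition of domination there exists some $s_{j} \in S$ with an arc from $s_{j}$ to $v$; since $j \ge 1$, this places $v$ in $N_{1}^{+}[w_{j}] \subseteq N_{j}^{+}[w_{j}]$.

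The only subtle point is pinpointing the source of the ``$+1$'': the sequence entry $w_{0}$ is a fire of radius $0$ and cannot reach anything beyond itself, so a dominating set of size $\gamma$ can only be deployed as fires of radii $1, 2, \dots, \gamma$, forcing the burning sequence to have length $\gamma + 1$ rather than $\gamma$. Beyond this indexing accounting, the argument is a direct translation of the statement ``every vertex of $D$ is at most one arc away from $S$'' into the burning vocabulary, so there is no real obstacle to overcome.
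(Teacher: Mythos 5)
Your proof is correct and follows essentially the same approach as the paper: fix an orientation, use a minimum dominating set as the fires of radii $1,\dots,\gamma$, and prepend an arbitrary vertex as the radius-$0$ fire. You simply spell out the domination-implies-burning verification in slightly more detail than the paper does.
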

\begin{proof}
Let $\ori{G}$ be an orientation of $G$. Let $\{w_{1}, \dots, w_{\gamma}\}$ be a minimum dominating set of $\ori{G}$.
By arbitrarily setting $w_{0}$, we construct a sequence $\sigma = \langle w_{0}, w_{1}, \dots, w_{\gamma} \rangle$ of length $\gamma+1$.
Since $\{w_{1}, \dots, w_{\gamma}\}$ is a dominating set, $\sigma$ is a burning sequence of $\ori{G}$.
Therefore, $\obn(G) = \max_{\ori{G}} \bn(\ori{G}) \le \max_{\ori{G}} \gamma(\ori{G}) + 1 = \odn(G) + 1$.
\end{proof}
Since $\odn(G) \in O(\alpha \cdot \log |V(G)|)$~\cite{CaroH11,HarutyunyanLNT18},
\cref{obs:obn_le_ods+1} implies that $\obn(G) \in O(\alpha \cdot \log |V(G)|)$.

For orientable domination number, it is known that
\begin{equation}
  \alpha(G) \le \odn(G) \le n - \mn(G)
  \label{eq:dom}
\end{equation}
for every $n$-vertex graph $G$~\cite{CaroH12,ChartrandVY96}.
The following counterpart for orientable burning number can be shown in almost the same way.
\begin{lemma}
\label{lem:alpha_le_obn_le_ec+1}
For every $n$-vertex graph $G$, $\alpha(G) \le \obn(G) \le n - \mn(G) + 1$.
\end{lemma}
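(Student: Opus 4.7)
I would split the statement into its two inequalities.

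For the lower bound $\alpha(G) \le \obn(G)$, the plan is to exhibit a single orientation of $G$ whose burning number is at least $\alpha(G)$. The natural candidate: fix a maximum independent set $I$ of $G$, orient every edge incident to $I$ away from $I$ (this is well-defined because $I$ is independent, so each such edge has exactly one endpoint in $I$), and orient the remaining edges, which all lie inside $V \setminus I$, arbitrarily. Every vertex of $I$ then has in-degree $0$ in the resulting orientation $\ori{G}$. Consequently, such a vertex $v$ can appear in $N_i^+[w_i]$ only when $w_i = v$, so any burning sequence of $\ori{G}$ must list each vertex of $I$ explicitly, which forces its length to be at least $|I| = \alpha(G)$.

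For the upper bound $\obn(G) \le n - \mn(G) + 1$, I would not prove anything new; I would simply chain the two facts already available in the excerpt. \cref{obs:obn_le_ods+1} gives $\obn(G) \le \odn(G) + 1$, and the inequality $\odn(G) \le n - \mn(G)$ from \cref{eq:dom} yields the conclusion after substitution. A direct argument using a maximum matching is also possible — in any orientation, placing the tails of the $\mn(G)$ matching edges at positions $1,\dots,\mn(G)$ (each of radius at least $1$, hence burning both endpoints of its edge) and filling the remaining positions with the unmatched vertices produces a burning sequence of length $n - \mn(G)$, which is in fact marginally stronger — but the chained citation is cleaner.

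The only point I would double-check is in the lower bound: that two distinct vertices of $I$ cannot be covered by a single fire in the constructed orientation. This is immediate from $N_i^+[w_i] \cap I \subseteq \{w_i\}$ whenever every vertex of $I$ has in-degree $0$. Neither direction poses a real obstacle: the lower bound is essentially the orientation counterpart of the standard fact that $\alpha(G) \le \gamma(\ori{G})$ for orientations in which every vertex of a maximum independent set is a sink-within-$I$, and the upper bound is just a two-line consequence of known results.
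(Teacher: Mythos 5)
Your proposal matches the paper's proof on both halves: the lower bound uses the same orientation in which every vertex of a maximum independent set $I$ is a source (in-degree~$0$), so each must appear as a fire and the burning number is at least $|I|=\alpha(G)$; the upper bound is the same chaining of $\obn(G)\le\odn(G)+1$ (\cref{obs:obn_le_ods+1}) with $\odn(G)\le n-\mn(G)$ from \cref{eq:dom}. One caveat on your parenthetical remark: the ``direct argument'' does \emph{not} yield the strictly stronger bound $n-\mn(G)$. When $G$ has a perfect matching there is no unmatched vertex to occupy position~$0$, so after placing the $\mn(G)$ matching tails at positions $1,\dots,\mn(G)$ the sequence already has length $\mn(G)+1 = n-\mn(G)+1$; this is tight, since \cref{thm:ke-graphs} gives $\obn(mP_2) = m+1 = n-\mn(G)+1$, showing $n-\mn(G)$ cannot be a valid upper bound in general.
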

\begin{proof}
The second inequality follows from the corresponding one in \cref{eq:dom} and \cref{obs:obn_le_ods+1}.

To show the first inequality, let $I$ be a maximum independent set of $G$.
Since $I$ is independent,
there is an orientation $\ori{G}$ of $G$ such that all vertices in $I$ are of in-degree~$0$.
Since every burning sequence has to contain all vertices of in-degree~$0$,
we have $\obn(G) \ge \bn(\ori{G}) \ge |I| = \alpha(G)$.
\end{proof}

\cref{eq:dom} and the equality $\alpha(G) = n - \mn(G)$ for {\KE} graphs imply that $\odn(G) = \alpha(G)$ for them~\cite{CaroH12}.
On the other hand, because of the additive factor $+1$ in \cref{lem:alpha_le_obn_le_ec+1},
we only know that $\obn(G) \in \{\alpha(G), \alpha(G)+1\}$ for {\KE} graphs.
In \cref{sec:KE-graph}, we present a polynomial-time algorithm that determines which is the case.

A \emph{tournament} is an orientation of a complete graph.
A \emph{king} of a tournament $T = (V,A)$ is a vertex $v \in V$ such that $N^{+}_{2}[v] = V$~\cite{Maurer80}.
The following fact due to Landau~\cite{Landau53} is well known.
\begin{proposition}
[\cite{Landau53}]
\label{prop:tournament-radius}
In a tournament, every vertex with the maximum out-degree is a king.
\end{proposition}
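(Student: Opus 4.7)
The plan is a standard contradiction argument on out-degrees. Let $T = (V, A)$ be a tournament and let $v \in V$ be a vertex of maximum out-degree, say $d^{+}(v) = \Delta$. Suppose, for contradiction, that $v$ is not a king, so there exists a vertex $u \in V$ with $u \notin N^{+}_{2}[v]$. In particular, $u \ne v$, there is no arc from $v$ to $u$, and there is no arc from any out-neighbor of $v$ to $u$.

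The core observation I want to extract is that $u$ then dominates a strictly larger set than $v$ does. First, since $T$ is a tournament, between $v$ and $u$ exactly one arc exists, and because $(v, u) \notin A$, we must have $(u, v) \in A$. Second, for every out-neighbor $w$ of $v$, the arc $(w, u)$ is absent (otherwise $u \in N^{+}_{2}[v]$ via $w$), so by the tournament property the arc $(u, w)$ is present. Hence the out-neighborhood of $u$ contains all of $N^{+}_{1}[v] \setminus \{v\}$ together with $v$ itself, giving $d^{+}(u) \ge \Delta + 1$. This contradicts the maximality of $\Delta$, so $v$ must be a king.

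I do not foresee a real obstacle here: the argument is essentially a one-line pigeonhole-style check once the two cases ($v$--$u$ arc direction and $w$--$u$ arc direction for each out-neighbor $w$) are inspected. The only thing to be careful about is to state the tournament property explicitly when flipping each non-arc into its reverse arc, so that the count $d^{+}(u) \ge d^{+}(v) + 1$ is rigorous rather than an off-by-one assertion.
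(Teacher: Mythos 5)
Your proof is correct and is the standard argument for Landau's theorem: assume a vertex $v$ of maximum out-degree is not a king, exhibit an unreached vertex $u$, observe that all arcs between $u$ and $\{v\}\cup N^+(v)$ must point out of $u$ by the tournament property, and conclude $d^+(u)\ge d^+(v)+1$, a contradiction. The paper cites this as a known result without proof, so there is nothing to compare against; your write-up is clean and the only point you flagged (being explicit about flipping each non-arc) is indeed what makes the $+1$ count rigorous.
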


By using \cref{prop:tournament-radius}, we can show the following upper bound of orientable burning number in terms of clique cover number.
\begin{lemma}
\label{lem:obn_le_cc+2}
For every graph $G$, $\obn(G) \le \cc(G) + 2$.
\end{lemma}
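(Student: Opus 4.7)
The plan is to exploit a minimum clique cover together with Landau's theorem (\cref{prop:tournament-radius}) applied independently to each clique. Fix an arbitrary orientation $\ori{G}$ and a partition $C_1, \dots, C_t$ of $V(G)$ into $t = \cc(G)$ cliques. Each induced subgraph $\ori{G}[C_i]$ is a tournament, so \cref{prop:tournament-radius} provides a king $k_i \in C_i$ with $N^{+}_{2,\ori{G}}[k_i] \supseteq C_i$.

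With these $t$ kings in hand, I would construct a burning sequence of $\ori{G}$ of length $t+2$. The idea is that in a sequence of length $t+2$, the positions $2, 3, \dots, t+1$ all carry radius at least $2$, and there are exactly $t$ of them, which is precisely enough to host one king each: set $w_{i+1} = k_i$ for $i = 1, \dots, t$, and pick $w_0, w_1$ arbitrarily. Then $w_{i+1}$ burns $N^{+}_{i+1}[k_i] \supseteq N^{+}_{2}[k_i] \supseteq C_i$, so collectively the kings burn $C_1 \cup \cdots \cup C_t = V(G)$. Hence $\bn(\ori{G}) \le t + 2$, and taking the maximum over orientations yields the claim.

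There is no substantive obstacle beyond noticing this counting match between the number of cliques and the number of sequence positions whose radius is at least $2$. The additive $+2$ in the bound corresponds exactly to the two low-radius positions $w_0$ (which burns only itself) and $w_1$ (radius $1$), which cannot be usefully assigned to kings and are simply treated as slack.
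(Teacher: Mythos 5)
Your proof is correct and is essentially identical to the paper's: both fix a minimum clique cover, invoke Landau's theorem to pick a king in each clique, assign those kings to the positions of radius at least~$2$, and discard $w_0, w_1$ as slack. The only difference is cosmetic indexing.
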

\begin{proof}
Let $\mathcal{C} = \{C_{2},\dots,C_{\cc+1}\}$ be a minimum clique cover of $G = (V,E)$. (Notice the shift in the numbering.)
Given an orientation $\ori{G}$ of $G$,
we construct a sequence $\langle w_{0}, w_{1}, \dots, w_{\cc + 1}\rangle$
by setting $w_{0}$ and $w_{1}$ to arbitrary vertices
and setting $w_{i}$ with $i \ge 2$ to a king of $\ori{G}[C_{i}]$.
Since $C_{i} \subseteq N^{+}_{2}[w_{i}]$ for $2 \le i \le \cc+1$,
it holds that $\bigcup_{0 \le i \le \cc+1} N^{+}_{i}[w_{i}] = V$.
Thus, $\langle w_{0}, w_{1}, \dots, w_{\cc(G) + 1}\rangle$ is a burning sequence of $\ori{G}$ with length $\cc(G)+2$.
\end{proof}

Recall that $\cc(G) = \alpha(G)$ holds for every perfect graph $G$.
Hence, \cref{lem:alpha_le_obn_le_ec+1,lem:obn_le_cc+2} imply the following almost tight bounds for perfect graphs.
\begin{corollary}
\label{cor:perfect_graphs}
For every perfect graph $G$, $\alpha(G) \le \obn(G) \le \alpha(G) + 2$.
\end{corollary}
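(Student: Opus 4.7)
The plan is to combine the two bounds already established in Lemmas 2 and 3 with the defining property of perfect graphs. First, I would appeal to \cref{lem:alpha_le_obn_le_ec+1} to get the lower bound $\alpha(G) \le \obn(G)$ for free; this part uses no perfection assumption at all, so there is nothing further to do.

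For the upper bound, I would invoke \cref{lem:obn_le_cc+2} to obtain $\obn(G) \le \cc(G) + 2$, and then reduce $\cc(G)$ to $\alpha(G)$. The key point is the identity $\cc(G) = \chi(\overline{G})$ and $\alpha(G) = \omega(\overline{G})$ noted in the preliminaries. Since the class of perfect graphs is closed under complementation (Lov\'asz's perfect graph theorem, already cited in the paper), $\overline{G}$ is perfect whenever $G$ is, and applying $\omega = \chi$ to $\overline{G}$ yields $\cc(G) = \alpha(G)$. Substituting into the bound from \cref{lem:obn_le_cc+2} gives $\obn(G) \le \alpha(G) + 2$, completing the chain.

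There is no real obstacle here: the entire argument is a two-line composition of the previously proved lemmas with the standard complement-closure property of perfect graphs. The only thing one needs to be careful about is citing the correct direction of the perfect graph identity (applying $\omega(H) = \chi(H)$ to $H = \overline{G}$ rather than to $G$ itself), which is precisely what the preliminaries record.
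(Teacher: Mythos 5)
Your proposal is correct and is essentially the paper's own argument: lower bound from \cref{lem:alpha_le_obn_le_ec+1}, upper bound from \cref{lem:obn_le_cc+2} combined with $\cc(G) = \alpha(G)$ for perfect graphs (which the preliminaries already record as a consequence of Lov\'asz's complement-closure theorem). Your extra detail about applying $\omega = \chi$ to $\overline{G}$ rather than $G$ is just an unpacking of the identity the paper invokes directly.
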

Since the independence number of a perfect graph $G$ can be computed in polynomial time~\cite{GrotschelLS88},
one can compute in polynomial time a value $b$ such that $b \le \obn(G) \le b +2$.
We left the complexity of \textsc{Orientable Burning Number} on perfect graphs unsettled.

\cref{cor:perfect_graphs} implies that $\obn(K_{n}) \le 3$.
As $\odn(K_{n}) \simeq \log_{2} n$, this example shows that the gap between $\obn(G)$ and $\odn(G)$ can be arbitrarily large.
It is easy to see that the lower bound $\alpha(G)$ is not always tight.
For example, $\obn(P_{2}) = 2 = \alpha(P_{2}) + 1$, where $P_{n}$ is the path on $n$ vertices.
To give examples of graphs $G$ with $\obn(G) = \alpha(G) + 2$,
we show that $\obn(K_{n}) = 3$ for $n \ge 5$.
To this end, we need the following simple observation.
\begin{observation}
\label{obs:indsub}
If $H$ is an induced subgraph of $G$, then $\obn(H) \le \obn(G)$.
\end{observation}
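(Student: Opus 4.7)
The goal is to show that any orientation $\ori{H}$ of the induced subgraph $H$ can be mimicked inside some orientation $\ori{G}$ of $G$, so that $\obn(G) \ge \bn(\ori{H})$. The plan is to choose $\ori{H}$ achieving $\bn(\ori{H}) = \obn(H)$ and extend it to an orientation $\ori{G}$ by a simple rule: keep the orientation of $\ori{H}$ on $E(H)$, orient every edge of $G$ with exactly one endpoint in $V(H)$ from its $V(H)$-endpoint toward $V(G) \setminus V(H)$, and orient edges inside $V(G) \setminus V(H)$ arbitrarily. The whole point of this construction is to make $V(H)$ a ``sink region'': there is no directed edge from $V(G) \setminus V(H)$ into $V(H)$.

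Next I would argue that $\bn(\ori{G}) \ge \bn(\ori{H})$. Let $\langle w_{0},\dots,w_{b-1}\rangle$ be a burning sequence of $\ori{G}$ of length $b = \bn(\ori{G})$. Because there are no arcs leaving $V(G)\setminus V(H)$ toward $V(H)$ in $\ori{G}$, any vertex $w_i \in V(G)\setminus V(H)$ has $N^{+}_{i,\ori{G}}[w_i] \subseteq V(G)\setminus V(H)$ and therefore contributes nothing to burning $V(H)$. Moreover, for any $w_i \in V(H)$ and any $v \in V(H)$ reached from $w_i$ in $\ori{G}$, a shortest $w_i$-to-$v$ dipath stays in $V(H)$, because once it leaves $V(H)$ it can never re-enter. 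Hence the distance from $w_i$ to $v$ is the same in $\ori{H}$ and in $\ori{G}$.

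Consequently, the sequence obtained from $\langle w_{0},\dots,w_{b-1}\rangle$ by replacing each $w_i \in V(G)\setminus V(H)$ with an arbitrary vertex of $H$ is a burning sequence of $\ori{H}$ of length $b$, showing $\bn(\ori{H}) \le \bn(\ori{G})$. Combining everything gives $\obn(H) = \bn(\ori{H}) \le \bn(\ori{G}) \le \obn(G)$, as required.

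I do not foresee a real obstacle here: the argument is essentially a ``sink trick'' that is standard for monotonicity statements of this type, and the only small point requiring care is the observation that shortest paths between two vertices of $V(H)$ in $\ori{G}$ never detour through $V(G) \setminus V(H)$, which is immediate from the one-way orientation we imposed on the cut.
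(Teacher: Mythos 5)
Your proof is correct and uses essentially the same construction as the paper: orient every cut edge from $V(H)$ outward so that fires placed outside $V(H)$ cannot reach $V(H)$, and then replace those useless fires arbitrarily. The paper phrases this as adding one vertex at a time (so the added vertex becomes a single out-degree-$0$ sink and the path argument is trivial), but the underlying idea and bookkeeping are identical to yours.
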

\begin{proof}
It suffices to show the statement for the case where $V(G) = V(H) \cup \{v\}$ with $v \notin V(H)$.
Let $\ori{H}$ be an orientation of $H$ that satisfies $\obn(H) = \bn(\ori{H})$.
Let $\ori{G}$ be an orientation of $G$
such that $A(\ori{H}) \subseteq A(\ori{G})$
and that the arcs in $A(\ori{G}) \setminus A(\ori{H})$ are oriented toward the new vertex $v$.
It suffices to show that $\bn(\ori{H}) \le \bn(\ori{G})$.
Let $\sigma = \langle w_{0}, \dots, w_{b-1} \rangle$ be a burning sequence of $\ori{G}$.
Since $v$ has out-degree~$0$ in $\ori{G}$,
we have $N^{+}_{i, \ori{H}}[w_{i}] = N^{+}_{i, \ori{G}}[w_{i}] \setminus \{v\}$
for all $i$ with $w_{i} \ne v$.
Thus, if $v \notin \{w_{0}, \dots, w_{b-1}\}$, then $\sigma$ is a burning sequence of $\ori{H}$.
Assume that $w_{j} = v$ for some $j$.
We obtain a sequence $\sigma'$ from $\sigma$ by replacing the $j$th vertex with an arbitrary vertex in $\ori{H}$.
Since $N^{+}_{j, \ori{G}}[w_{j}] = \{v\}$, $\sigma'$ is a burning sequence of $\ori{H}$.
\end{proof}

\begin{lemma}
\label{lem:complete-graphs}
$\obn(K_{1}) = 1$,
$\obn(K_{n}) = 2$ for $2 \le n \le 4$, and
$\obn(K_{n}) = 3$ for $n \ge 5$.
\end{lemma}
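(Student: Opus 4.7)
The plan is to split the argument into three regimes according to $n$, with the cut points at $n=2$ and $n=5$. The case $n=1$ is immediate. For $2 \le n \le 4$ both bounds come from a single pigeonhole count on out-degrees. For $n \ge 5$ the upper bound drops out of the perfect-graph bound already proved, and the lower bound reduces, via \cref{obs:indsub}, to exhibiting one bad orientation of $K_5$.

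First I would note that $\obn(K_n) \ge 2$ whenever $n \ge 2$ because the single-term sequence $\langle w_{0} \rangle$ burns only $w_{0}$. For the matching upper bound in the range $2 \le n \le 4$, the key observation is that a length-$2$ sequence $\langle w_{0}, w_{1} \rangle$ is a burning sequence if and only if $w_{1}$ has out-degree at least $n-2$ (then $w_{0}$ can be chosen as the one remaining uncovered vertex, if any). Since any tournament on $n$ vertices has total out-degree $\binom{n}{2}$, some vertex has out-degree at least $\lceil (n-1)/2 \rceil$, and a direct check shows $\lceil (n-1)/2 \rceil \ge n-2$ exactly when $n \le 4$. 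This yields $\obn(K_n) \le 2$ in this range.

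For $n \ge 5$, the upper bound $\obn(K_n) \le 3$ is immediate from \cref{cor:perfect_graphs} since $K_n$ is perfect with $\alpha(K_n) = 1$. For the matching lower bound, by \cref{obs:indsub} it suffices to produce a single orientation of $K_5$ whose burning number is at least $3$, because $K_5$ is an induced subgraph of $K_n$ for every $n \ge 5$. I would use the regular tournament on $V = \mathbb{Z}/5\mathbb{Z}$ in which the arc goes from $i$ to $j$ whenever $j - i \in \{1, 2\} \pmod{5}$. Every vertex has out-degree exactly $2$, so $|N^{+}_{1}[v]| = 3$ for all $v$; combined with $|N^{+}_{0}[w_{0}]| = 1$, no length-$2$ sequence can cover all five vertices, hence $\bn \ge 3$.

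There is no serious obstacle. The only delicate point is recognizing that the pigeonhole inequality $\lceil (n-1)/2 \rceil \ge n-2$ holds with equality at $n=4$ and fails from $n=5$ onward, which is precisely why the threshold sits at $n=5$; producing a regular tournament on five vertices to witness that failure is routine.
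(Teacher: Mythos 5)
Your proposal is correct and follows essentially the same route as the paper: the same pigeonhole bound on the maximum out-degree for $2 \le n \le 4$, the same appeal to \cref{cor:perfect_graphs} and \cref{obs:indsub} to reduce $n \ge 5$ to $K_5$, and the same regular tournament on $\mathbb{Z}/5\mathbb{Z}$ (arcs $i \to i+1$, $i \to i+2$) as the witness. The only cosmetic difference is that you phrase the $2 \le n \le 4$ upper bound as an iff-characterization of length-$2$ burning sequences in terms of out-degree, whereas the paper argues directly; the content is identical.
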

\begin{proof}
Clearly, $\obn(K_{1}) = 1$.
Assume first that $2 \le n \le 4$.
We can see that $\obn(K_{n}) > 1$ as a fire of radius~$0$ can burn only one vertex.
Let $\ori{K}_{n}$ be an orientation of $K_{n}$.
Observe that $\ori{K}_{n}$ has a vertex of out-degree at least $\lceil (n-1)/2 \rceil$ as there are $n(n-1)/2$ arcs.
Hence we can burn at least $1 + \lceil (n-1)/2 \rceil$ vertices by placing a fire of radius~$1$ at a vertex of maximum out-degree.
This is already enough for $n = 2$. If $n \in \{3,4\}$, there can be one unburnt vertex, which can be burned by a fire of radius~$0$.

Next assume that $n \ge 5$.
By \cref{cor:perfect_graphs,obs:indsub}, it suffices to show that $\obn(K_{5}) > 2$.
Let $V(K_{5}) = \{0,1,2,3,4\}$.
Let $\ori{K}_{5}$ be the orientation of $K_{5}$ that has $A(\ori{K}_{5}) = \{(i, i+1), (i, i+2) \mid 0 \le i \le 4\}$, where the addition is modulo $5$.
See \cref{fig:K5}.
This orientation shows that $\obn(K_{5}) > 2$ since a fire of radius~$1$ burns only three vertices.
\begin{figure}[htb]
\centering
\includegraphics{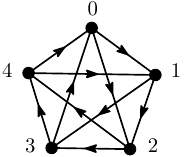}
\caption{An optimal orientation of $K_{5}$.}
\label{fig:K5}
\end{figure}
\end{proof}


\subsection{Polynomial-time algorithm for {\KE} graphs}
\label{sec:KE-graph}

We now present a polynomial-time algorithm for {\KE} graphs.
Recall that $\obn(G) \in \{\alpha(G), \alpha(G)+1\}$ for every {\KE} graph $G$.
Intuitively, we show that the fire of radius~$0$ (often called $w_{0}$ in our exposition) is useful in most of the cases
and the case $\obn(G) = \alpha(G)+1$ rarely happens.

\begin{theorem}
\label{thm:ke-graphs}
Let $G$ be a {\KE} graph with more than four vertices, then
\[
  B(G) =
  \begin{cases}
    \alpha(G)+1 & \text{if $G = m P_{2}$}, \\
    \alpha(G)   & \text{otherwise},
  \end{cases}
\]
where $m = |E|$ and $m P_{2}$ is the disjoint union of $m$ edges.
\end{theorem}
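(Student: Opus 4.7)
The plan is to apply \cref{lem:alpha_le_obn_le_ec+1}, which combined with the {\KE} identity $\mn(G) = n - \alpha(G)$ (where $n = |V(G)|$) gives $\alpha(G) \le \obn(G) \le \alpha(G) + 1$, so the task reduces to deciding which of the two values is attained. For $G = m P_2$ I would verify $\obn(G) \ge \alpha(G) + 1$ directly: in every orientation each of the $m$ disjoint edges has its tail as a unique in-degree-$0$ vertex that must appear in any burning sequence, and because the edges are vertex-disjoint the head paired with the radius-$0$ fire has no burnt in-neighbor at all, so at least $m + 1 = \alpha(G) + 1$ fires are required.

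For $G \ne m P_2$ with $n > 4$ I would prove the matching upper bound $\obn(G) \le \alpha(G)$ by producing, for each orientation $\ori{G}$, a burning sequence of length $\alpha := \alpha(G)$. Fix a maximum independent set $I$, its complementary vertex cover $C = V(G) \setminus I$, and a maximum matching $M$; the {\KE} condition gives $|C| = \mn(G) = |M|$, so $M$ injects $C$ into $I$ and we may write $I = I_M \sqcup I_0$ with $|I_M| = \mn(G)$. When $I_0 \ne \emptyset$ I would let $w_0$ be any vertex of $I_0$ and place the remaining $\alpha - 1$ fires on $I_0 \setminus \{w_0\}$ together with the tails of the matching edges under $\ori{G}$, in any order. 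Each positive-radius matching-tail fire burns both endpoints of its edge and each $I_0$-fire burns itself, so the union covers $I_0 \cup V(M) = V(G)$.

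The interesting case is $I_0 = \emptyset$: then $M$ is a perfect matching, $n = 2\alpha$, and the hypothesis $G \ne m P_2$ supplies an edge $\{x,y\} \notin M$. Without loss of generality orient it $x \to y$ and let $x'$ and $y'$ denote the $M$-partners of $x$ and $y$; the four vertices $\{x,x',y,y'\}$ are distinct and carry the matching edges $e_x = \{x,x'\}$ and $e_y = \{y,y'\}$. I would then do a short case analysis on the four orientations of $e_x$ and $e_y$. The plan in every case is to place the middle fires $w_1, \ldots, w_{\alpha-2}$ on the tails of the matching edges in $M \setminus \{e_x, e_y\}$ so that each of them burns its whole edge, and to choose $w_0$ and $w_{\alpha-1}$ so that $N^+_0[w_0] \cup N^+_{\alpha-1}[w_{\alpha-1}]$ contains $\{x,x',y,y'\}$. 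The key observation is that the extra arc $x \to y$ together with the matching edge at $x$ already yields a vertex whose out-ball of radius at most $2$ captures three of the four vertices: $N^+_1[x] \supseteq \{x,x',y\}$ whenever $x \to x'$, and $N^+_2[x'] \supseteq \{x',x,y\}$ whenever $x' \to x$.

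The main technical obstacle is the ``doubly incoming'' subcase $x' \to x$, $y' \to y$, where no out-ball of radius at most $2$ inside $\{x,x',y,y'\}$ contains all four vertices; there I would have to take $w_0 = y'$ to burn $y'$ explicitly and rely on $w_{\alpha-1} = x'$ at radius $\alpha - 1 \ge 2$ to cover $\{x',x,y\}$ via the length-$2$ directed path $x' \to x \to y$. This is precisely where the hypothesis $n > 4$ (equivalently $\alpha \ge 3$ in this case) is used, since it guarantees the availability of a radius-$2$ fire; the graph $P_4$ confirms that at $n = 4$ the theorem genuinely fails, with $\obn(P_4) = 3 > 2 = \alpha(P_4)$.
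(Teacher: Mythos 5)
Your proposal is correct and follows essentially the same approach as the paper's proof: reduce to the $\{\alpha(G),\alpha(G)+1\}$ dichotomy via the {\KE} identity and \cref{lem:alpha_le_obn_le_ec+1}, handle $mP_2$ by a pigeonhole argument on the radius-$0$ fire, and for $G \ne mP_2$ split on whether a perfect matching exists, using the non-matching edge to form a $P_4$ burnable by one radius-$0$ fire and one fire of radius at least~$2$. The only cosmetic differences are that you park the large fire on $w_{\alpha-1}$ rather than $w_2$, you route the ``no perfect matching'' case through the $I = I_M \sqcup I_0$ decomposition instead of directly through the $M$-unmatched vertices, and you write out the four $P_4$-orientation cases that the paper delegates to a figure.
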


This immediately gives the complexity of \textsc{Orientable Burning Number} for {\KE} graphs.
\begin{corollary}
\label{cor:ke-graphs}
\textsc{Orientable Burning Number} on {\KE} graphs is solvable in polynomial time.
\end{corollary}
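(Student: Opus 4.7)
The plan is to use \cref{lem:alpha_le_obn_le_ec+1} together with the {\KE} identity $\alpha(G) + \mn(G) = |V(G)|$ to bracket $\alpha(G) \le \obn(G) \le \alpha(G) + 1$, so the theorem reduces to deciding between these two values.

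For $G = mP_2$, any orientation turns each component into a directed edge whose source has in-degree zero---forcing all $m$ sources into every burning sequence---and whose sink is reachable only from its own source, so that source must sit at a position of index at least $1$. Since the vertex at position $0$ can burn only itself, any burning sequence has length at least $m + 1$, and this matches the upper bound, giving $\obn(mP_2) = \alpha(mP_2) + 1$.

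For $G \ne mP_2$ with $|V(G)| \ge 5$, I plan to show $\obn(G) \le \alpha(G)$ by explicit construction. Fix an orientation $\ori{G}$, a maximum independent set $I$, and a matching $M$ of size $\mn(G)$ joining $I$ with $C := V \setminus I$ and saturating every vertex of $C$; write $\iota(c) \in I$ for the $M$-partner of $c$ and $U := I \setminus V(M)$. Partition $C$ according to the orientation of its $M$-edge into $J := \{c \in C : c \to \iota(c)\}$ and $K := C \setminus J$, and set $S := J \cup \iota(K) \cup U$, a set of size $\alpha(G)$. When $U \ne \emptyset$, placing any $u \in U$ at position $0$ and the remaining $\alpha(G) - 1$ elements of $S$ in any order at positions $1, \ldots, \alpha(G) - 1$ gives a burning sequence: each $c \in J$ burns $\iota(c)$ at distance~$1$ from its own position of index $\ge 1$, each $c \in K$ is burned symmetrically by $\iota(c) \in S$, and elements of $S$ burn themselves.

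The hard part is the subcase $U = \emptyset$, where $|V(G)| = 2\alpha(G)$ and the position-$0$ vertex is forced into $S$, leaving its matched partner uncovered by the above strategy. Since $G$ has no isolated vertex (else $U \ne \emptyset$) and $G \ne mP_2$, it must contain a component $H^*$ with at least four vertices---its order being even because $|U_{H^*}| = 0$. I would resolve this subcase by distributing the radii $\{0, \ldots, \alpha(G) - 1\}$ among components so that each $P_2$ component gets a single position of radius $\ge 1$ placed at its source (which burns both endpoints), while $H^*$ and the other larger components absorb position $0$. Recursing componentwise reduces the task to the situation where $G$ is itself connected, {\KE}, and of even order at least $6$; in this remaining case, the key ingredient I expect to use is that the additional connectivity provides a directed path of length at most $2$ from some other vertex of the sequence to the uncovered partner, restoring full coverage.
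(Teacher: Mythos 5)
Your handling of the $mP_2$ case and the no-perfect-matching case ($U \neq \emptyset$) is correct and matches the paper: in both treatments the burning sequence consists of the unmatched independent vertices plus, for each matching edge, a fire at the tail of the corresponding arc.

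The gap is in the perfect-matching subcase, which you yourself flag as unresolved. The missing insight is the explicit $P_4$ trick. Since $G \neq mP_2$, there is a non-matching edge $f$; because the matching is perfect, both endpoints of $f$ lie on matching edges, say $e_i$ and $e_j$ with $i \neq j$, so $e_i, f, e_j$ together form an induced or non-induced $P_4$. The paper relabels the matching so that this $P_4$ uses $e_0$ and $e_2$, then verifies by a four-case check that every orientation of $P_4$ is burned by fires of radii $0$ and $2$, and places the remaining $|M|-2$ fires at tails of the other matching arcs. Your sketch gestures at a ``directed path of length at most $2$'' from some other fire to the uncovered partner but does not identify which fire, which path, or --- crucially --- why the second fire available to this local configuration has radius at least $2$ rather than $1$ (note that radii $0$ and $1$ do \emph{not} suffice to burn every orientation of $P_4$, so the labeling of matching edges matters). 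Moreover, the componentwise recursion you propose is muddled: if several non-$P_2$ components exist, only one can receive the radius-$0$ fire, yet you write that ``$H^*$ and the other larger components absorb position $0$,'' and you do not say how the remaining large components are burned. In fact no recursion is needed at all once the $P_4$ is chosen: only the two matching edges touched by $f$ receive special treatment (radii $0$ and $2$), and every other matching edge --- in whatever component --- gets one tail fire.
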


\begin{proof}[\cref{thm:ke-graphs}]
Let $G = (V,E)$ be an $n$-vertex {\KE} graph.
Recall that $\alpha(G) = n - \mn(G)$ and $\obn(G) \in \{\alpha(G), \alpha(G)+1\}$.

First assume that $G = m P_{2}$. In this case, $\alpha(G) = m$ holds.
Suppose to the contrary that $\obn(G) = \alpha(G)$.
To burn the whole graph $G$, each connected component has to contain a fire.
Since we have $\alpha(G)$ connected components and $\alpha(G)$ fires,
each connected component contains exactly one fire.
However, since each connected component contains two vertices,
the one with the fire of radius~$0$ is not completely burned.
Thus, $\obn(G) \ge \alpha(G) + 1$.

Next assume that $G \ne m P_{2}$ and $G$ has no perfect matching.
Let $M = \{e_{1}, \dots, e_{\mn(G)}\}$ be a maximum matching of $G$.
By the non-existence of a perfect matching, $2\mn(G) < n$ holds,
and thus there exist $n - 2\mn(G) = \alpha(G) - \mn(G) > 0$ vertices not covered by $M$.
Let $\ori{G}$ be an orientation of $G$.
We set $w_{0},\dots,w_{\alpha(G) - \mn(G) -1}$ to the $\alpha(G)-\mn(G)$ vertices not covered by $M$,
and set $w_{\alpha(G) - \mn},\dots,w_{\alpha(G) -1}$ to the tails\footnote{%
The tail of an arc $a = (u,v)$ is the vertex $u$, which has $a$ as an out-going arc.}
of the arcs corresponding to $e_{1}, \dots, e_{\mn(G)}$.
The constructed sequence $\langle w_{0}, \dots, w_{\alpha(G)-1} \rangle$ is a burning sequence of $\ori{G}$ with length $\alpha(G)$.

Finally, we consider the case where $G \ne m P_{2}$ and $G$ has a perfect matching $M$.
Observe that in this case $\mn(G) = |V|/2$, and thus $\alpha(G) = n - \mn(G) = |V|/2$.
We set $M = \{e_{0}, \dots, e_{\alpha(G)-1}\}$.
Note that $|M| \ge 3$ as $|V| > 4$.
Since $G \ne m P_{2}$, $G$ has another edge $f \notin M$.
Without loss of generality, we assume that $f$ has one endpoint in $e_{0}$ and the other in $e_{2}$.
That is, $e_{0}$, $f$, and $e_{2}$ together form $P_{4}$.
Observe that every orientation of $P_{4}$ can be burned by fires of radii $0$ and $2$ (see \cref{fig:ke-graph}).
For an orientation $\ori{G}$ of $G$,
we construct a sequence $\sigma = \langle w_{0}, w_{1}, \dots, w_{\alpha(G) - 1}\rangle$
by setting $w_{0}$ and $w_{2}$ appropriately in the $P_{4}$ formed by $e_{0}$, $f$, $e_{2}$,
and setting $w_{i}$ to the tail of the arc corresponding to $e_{i}$ for each $i \in \{0, \dots, \alpha(G)-1\} \setminus \{0,2\}$.
The sequence $\sigma$ is a burning sequence of $\ori{G}$ with length $\alpha(G)$.
\begin{figure}[htb]
\centering
\includegraphics{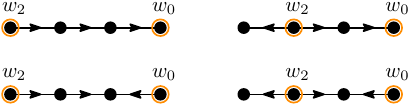}
\caption{All orientations of $P_{4}$ can be burned by fires of radii $0$ and $2$.
(By symmetry, the direction of the central edge is fixed.)}
\label{fig:ke-graph}
\end{figure}
\end{proof}

\section{Hardness of the problem}
\label{sec:hardness}
Now we demonstrate that \textsc{Orientable Burning Number} is intractable in general.
We present two reductions implying that
\begin{itemize}
 \item OBN is NP-hard, and
 \item OBN is W[1]-hard parameterized by the target burning number $b$.
\end{itemize}
The reductions are similar and they follow the same idea:
reduce the problem to \textsc{Independent Set} by adding a sufficiently large number of isolated vertices.

We can see that our reduction showing the W[1]-hardness also shows NP-hardness in general.
However, we present the separate reduction for NP-hardness as it has a wider range of applications.
Basically, our reduction for W[1]-hardness  works only for dense graphs,
while the one for NP-hardness works also for sparse graphs like planar graphs.

Technically, the reduction for W[1]-hardness is a little more involved as it has to control the number of additional isolated vertices
to keep the target burning number small.
Thus, we first prove W[1]-hardness and then show a similar proof for NP-hardness.

\subsection{W[1]-hardness parameterized by $b$}
We reduce the following problem, which is known to be W[1]-complete parameterized by the solution size $k$~\cite{CyganFKLMPPS15},
to OBN parameterized by the target burning number $b$.
\begin{description}
  \item[Problem:] \textsc{Multicolored Independent Set} (MCIS)
  \item[Input:] An undirected graph $G = (V, E)$ and a partition $(V_{1},\dots,V_{k})$ of $V$ into cliques.
  \item[Question:] Does $G$ contain an independent set of size $k$?
\end{description}

\begin{theorem}
\label{thm:W[1]-hardness}
\textsc{Orientable Burning Number} on connected graphs is $\mathrm{W[1]}$-hard parameterized by the target burning number $b$.
\end{theorem}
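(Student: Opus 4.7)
I would reduce from \textsc{Multicolored Independent Set} (MCIS), which is W[1]-hard parameterized by the number $k$ of color classes. Given an instance $(G, V_1, \dots, V_k)$, build an OBN instance $(G', b)$ by adding $N$ padding vertices to $G$ (with a small connecting gadget if needed to preserve connectivity) and setting the target burning number $b = N + k$. The number $N$ is chosen as a function of $k$ alone, so that $b$ is also a function of $k$ and the reduction qualifies as a parameterized reduction with respect to~$b$.

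The forward direction exploits the fact that every in-degree-$0$ vertex of an orientation must appear in every burning sequence, since no spread can reach it. Given a multicolored independent set $I = \{v_1, \dots, v_k\}$, orient $G'$ so that every vertex of $I$ together with every padding vertex has in-degree $0$; this is possible because $I$ is independent in $G$. The resulting orientation $\ori{G'}$ then has $k + N = b$ forced fires, so $\obn(G') \geq b$.

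The backward direction handles the NO case, where $\alpha(G) \leq k - 1$. For any orientation $\ori{G'}$, the burner assigns the smallest radii $0, 1, \dots, N - 1$ to the padding vertices (each such fire burns only itself) and uses the remaining $k - 1$ fires, of radii at least $N$, to burn $\ori{G}$. Hence it suffices to show that every orientation of $G$ admits a burning sequence of length $k - 1$ with radii $\geq N$. The key structural claim is that $k$ pairwise disconnected source strongly-connected components in $\ori{G}$ would yield an independent set of size $k$ in $G$; therefore the condensation of $\ori{G}$ has at most $k - 1$ source SCCs, and one well-placed fire per source SCC is enough.

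The main obstacle is choosing $N$: it must be small enough that $b = N + k$ depends only on $k$, yet large enough that a fire of radius $N$ covers the full forward reach from its source-SCC representative. A naive bound would force $N = \Omega(n)$, ruining the parameterized reduction. To circumvent this I would combine the source-SCC argument with \cref{prop:tournament-radius}: in any orientation, a king of each clique $V_i$ covers that clique in radius~$2$, so fires of small radius already do most of the work. The heart of the argument is then to show that, when $\alpha(G) < k$, one king's radius-$2$ ball can be made to cover two cliques simultaneously, so that $k - 1$ king fires jointly cover all $k$ cliques. Making this ``merging'' step precise, using the density of inter-clique edges forced by the absence of a multicolored independent set, is the technical crux of the reduction.
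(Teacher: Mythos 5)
Your reduction is the paper's construction: reduce from MCIS, pad with isolated vertices so that in-degree-$0$ vertices force the burning number up, and set $b = N + k$; the paper takes $N = 4$ plus a universal vertex for connectivity, giving $b = k+4$. Your forward direction via in-degree-$0$ forcing matches the paper's. For the backward direction you argue the contrapositive (no multicolored independent set implies every orientation burns in $N + k - 1$ steps), whereas the paper argues directly: given an orientation of burning number $\ge b$, the ``king sequence'' of length $b$ must be a shortest burning sequence, and two adjacent kings would let one be spliced out (\cref{clm:V_q-3}). The two framings rest on the same core fact.

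The ``merging'' step you flagged as the technical crux is a genuine gap in your write-up, but it is much simpler than you anticipate and needs no inter-clique density argument. Pick one king $x_i$ from each tournament $\ori{G}[V_i]$ (\cref{prop:tournament-radius}). If $\alpha(G) \le k-1$, these $k$ kings cannot be pairwise nonadjacent, so there is an arc, say $(x_i, x_j) \in A(\ori{G})$; then $N^+_3[x_i] \supseteq N^+_2[x_i] \cup N^+_2[x_j] \supseteq V_i \cup V_j$, so $x_i$ together with the remaining $k-2$ kings burn all of $G$ using only $k-1$ fires, each of radius at least~$3$. Note this is radius~$3$, not~$2$ as you wrote: one arc from $x_i$ to $x_j$ and then up to two more inside $V_j$. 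Taking $N \ge 3$ (the paper uses $4$) keeps $b$ a function of $k$ alone; your source-SCC detour is, as you correctly sensed, a dead end and should be dropped. Finally, you should commit to a concrete connecting gadget; the paper's universal vertex $u$ makes the padding vertices non-isolated, which costs a small case analysis on whether some padding vertex receives an arc from $u$, and you would need the analogous check.
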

\begin{proof}
Let $(G,V_{1},\dots,V_{k})$ be an instance of MCIS\@.
Let $H$ be the connected graph obtained from $G$
by first adding a set $I$ of four isolated vertices and then adding a universal vertex $u$.\footnote{%
If we do not need the connectivity, we can omit the universal vertex and the proof becomes slightly simpler.}
We prove that $(H, k+4)$ is a yes-instance of OBN if and only if $(G,k)$ is a yes-instance of MCIS\@.

To show the if direction, assume that $G$ has an independent set $S$ of size~$k$.
Since $S \cup I$ is an independent set of $H$, there is an orientation $\ori{H}$ of $H$ such that each vertex in $S \cup I$ has in-degree~$0$,
and thus $\bn(\ori{H}) \ge k+4$.

In the following, we show the only-if direction.
Assume that $(H, k+4)$ is a yes-instance of OBN and an orientation $\ori{H}$ of $H$ satisfies $\bn(\ori{H}) \ge k+4$.

We construct a sequence $\sigma = \langle w_{0}, \dots, w_{k+3} \rangle$ as follows.
If all vertices in $I$ are of in-degree~$0$, then we set $w_{0},w_{1},w_{2},w_{3}$ to the vertices in $I$.
Otherwise, we set $w_{3}$ to $u$ and set $w_{0}, w_{1}, w_{2}$ to three vertices of $I$ including the ones of in-degree~$0$ (if any exist).
For $1 \le i \le k$, we set $w_{i+3}$ to a king of the tournament $\ori{H}[V_{i}]$.
Recall that a king of a tournament can reach the other vertices in the tournament in at most two steps.
Recall also that every tournament has a king, which can be found in polynomial time (see \cref{prop:tournament-radius}).
We can see that $\sigma$ is a burning sequence of $\ori{H}$ (with length $k+4$) as follows.
Each vertex of in-degree~$0$ in $\{u\} \cup I$, if any exists, is burned by itself,
and the other vertices in $\{u\} \cup I$ are burned by $w_{3}$.
For $1 \le i \le k$, $w_{i+3}$ burns $V_{i}$ as $i + 3 > 2$ and $w_{i+3}$ is a king of $\ori{H}[V_{i}]$.
Since $\bn(\ori{H}) \ge k+4$, $\sigma$ is a shortest burning sequence of $\ori{H}$.

Now we show that $\{w_{4}, \dots, w_{k+3}\}$ is an independent set of $G$,
which implies that $(G,V_{1},\dots,V_{k})$ is a yes-instance of MCIS\@.
Suppose to the contrary that $G$ has an edge between vertices $w_{p}, w_{q} \in \{w_{4}, \dots, w_{k+3}\}$.
By symmetry, we may assume that $(w_{p}, w_{q}) \in A(\ori{H})$.
Let $\sigma' = \langle w'_{0}, \dots, w'_{k+2} \rangle$ be the sequence obtained from $\sigma$ by skipping $w_{q}$, that is, a sequence $\sigma'$ defined as
\[
  w'_{i} =
  \begin{cases}
    w_{i}   & 0 \le i \le q-1, \\
    w_{i+1} & q \le i \le k+2.
  \end{cases}
\]
We show that $\sigma'$ is a burning sequence of $\ori{H}$, which contradicts that $\sigma$ is shortest.

As $\{w'_{0}, w'_{1}, w'_{2}, w'_{3}\} = \{w_{0}, w_{1}, w_{2}, w_{3}\}$,
we have $\bigcup_{0 \le i \le 3} N^{+}_{i}[w'_{i}] \supseteq \{u\} \cup I$.
For $4 \le i \le q-1$, we have $w'_{i} = w_{i}$, and thus $N^{+}_{i}[w'_{i}] = N^{+}_{i}[w_{i}] \supseteq V_{i-3}$.
For $q \le i \le k+2$, $w'_{i} = w_{i+1}$ is a king of $V_{i-2}$,
and thus $N^{+}_{i}[w'_{i}] \supseteq V_{i-2}$.
The discussion so far implies that
\[
  V(G) \setminus V_{q-3} \subseteq \bigcup_{0 \le i \le k+2} N^{+}_{i}[w'_{i}].
\]
We now show that $V_{q-3}$ is also burned by $\sigma'$.
\begin{myclaim}
\label{clm:V_q-3}
$V_{q-3} \subseteq N^{+}_{p-1}[w'_{p-1}] \cup N^{+}_{p}[w'_{p}]$.
\end{myclaim}
\begin{subproof}[\cref{clm:V_q-3}]
Since $w_{q}$ is a king of $V_{q-3}$, we have $V_{q-3} \subseteq N_{2}^{+}[w_{q}]$.
As $(w_{p}, w_{q}) \in A(\ori{H})$, it holds that $N_{2}^{+}[w_{q}] \subseteq N_{3}^{+}[w_{p}]$.
Since $p \ge 4$, $N_{3}^{+}[w_{p}] \subseteq N_{p-1}^{+}[w_{p}]$ holds.
The chain of inclusions implies that $V_{q-3} \subseteq N_{p-1}^{+}[w_{p}]$.
Since $w_{p} \in \{w'_{p-1}, w'_{p}\}$,
we have $V_{q-3} \subseteq N_{p-1}^{+}[w'_{p-1}]$ or $V_{q-3} \subseteq N_{p-1}^{+}[w'_{p}] \subseteq N_{p}^{+}[w'_{p}]$,
as desired.
\end{subproof}
By \cref{clm:V_q-3}, we conclude that $\sigma'$ is a burning sequence of $\ori{H}$ with length $k+3$.
This contradicts the assumption that $\sigma$ is a shortest one.
\end{proof}

\subsection{NP-hardness}

We reduce the following NP-complete problem to OBN\@.
\begin{description}
  \item[Problem:] \textsc{Independent Set}
  \item[Input:] An undirected graph $G = (V, E)$ and an integer $k$.
  \item[Question:] Does $G$ contain an independent set of size $k$?
\end{description}

\textsc{Independent Set} is quite well studied
and known to be NP-complete on many restricted graph classes
such as planar graphs of maximum degree~$3$~\cite{GareyJ77rst}.
\cref{thm:NP-hardness} below shows that the hardness of \textsc{Independent Set}
on a graph class can be translated to the hardness of OBN on the same graph class,
under an assumption that the class does not change by additions of isolated vertices.

The proof of \cref{thm:NP-hardness} is quite similar to that of \cref{thm:W[1]-hardness}.
The main difference is that here we can increase the target burning number as much as we like.
This makes the discussion very simple; e.g., we do not need the concept of kings any more,
and thus the hardness can be proved for almost all graph classes for which \textsc{Independent Set} is hard.
\begin{theorem}
\label{thm:NP-hardness}
Let $\mathcal{G}$ be a graph class such that \textsc{Independent Set} is $\mathrm{NP}$-complete on $\mathcal{G}$.
If $\mathcal{G}$ is closed under additions of isolated vertices,
then \textsc{Orientable Burning Number} on $\mathcal{G}$ is $\mathrm{NP}$-hard.
\end{theorem}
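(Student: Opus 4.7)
The plan is to reduce \textsc{Independent Set} on $\mathcal{G}$ to \textsc{Orientable Burning Number} on $\mathcal{G}$, following the same ``add isolated vertices'' idea used in \cref{thm:W[1]-hardness} but now exploiting the freedom to let the target burning number grow with the input. Given an instance $(G,k)$ with $n = |V(G)|$, I would form $H$ by adding $N := n$ isolated vertices to $G$; then $H \in \mathcal{G}$ by closure, the construction is polynomial, and the target burning number is $N+k$. I claim that $(G,k)$ is a yes-instance of \textsc{Independent Set} if and only if $(H, N+k)$ is a yes-instance of OBN.

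The forward direction is immediate: if $G$ has an independent set $S$ of size $k$, then $S$ together with the $N$ added isolated vertices is an independent set of size $N+k$ in $H$, so $\obn(H) \ge \alpha(H) \ge N+k$ by \cref{lem:alpha_le_obn_le_ec+1}.

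For the converse, I plan to establish the orientation-wise upper bound $\bn(\ori{H}) \le N + \alpha(G)$ for every orientation $\ori{H}$ of $H$. Let $\ori{G}$ denote the restriction of $\ori{H}$ to $V(G)$, and let $T \subseteq V(G)$ be obtained by picking one vertex from each strongly connected component of $\ori{G}$ that has no incoming arc from another component. Every vertex of $\ori{G}$ is reachable from some vertex in $T$, and two distinct elements of $T$ cannot be adjacent in $G$, for otherwise the resulting arc in $\ori{G}$ would enter the component of one of them, violating its source property. Thus $T$ is independent in $G$, so $|T| \le \alpha(G)$. The sequence that burns the $N$ added isolated vertices first, using radii $0, 1, \ldots, N-1$, and then places fires of radii $N, N+1, \ldots, N+|T|-1$ at the vertices of $T$, is then a valid burning sequence of $\ori{H}$: every radius assigned to a vertex of $T$ is at least $N = n$, which exceeds the largest possible finite distance in $\ori{G}$, so each $t \in T$ burns its entire out-reachability set. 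Hence $\bn(\ori{H}) \le N + |T| \le N + \alpha(G)$; combining this with $\obn(H) \ge N+k$ yields $\alpha(G) \ge k$.

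The main conceptual step is the strongly-connected-component argument certifying $|T| \le \alpha(G)$. The only remaining check is that $N$ is chosen large enough so that every radius assigned to a vertex of $T$ dominates all finite distances in $\ori{G}$ (any $N \ge n-1$ works), while keeping $N+k$ polynomially bounded in the input size; both conditions are satisfied by $N = n$.
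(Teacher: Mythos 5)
Your proof is correct, and the converse direction uses a genuinely different argument from the paper's. The paper argues by contradiction from a \emph{shortest} burning sequence: it assumes $\bn(\ori{H}) \ge n + k$, takes a shortest burning sequence, arranges the $n$ isolated vertices into the first $n$ positions, and then shows the remaining fires (all of radius $\ge n$) must form an independent set in $G$ via a fire-skipping argument — if two were joined by an edge, the fire at the tail of the arc would render the one at the head redundant, contradicting minimality. You instead prove the \emph{universal} upper bound $\bn(\ori{H}) \le n + \alpha(G)$ for every orientation $\ori{H}$, by directly constructing a burning sequence: burn the $n$ isolated vertices with radii $0,\dots,n-1$, then place fires of radius $\ge n$ at one representative of each source strongly connected component of $\ori{H}[V(G)]$. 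The observation that these representatives form an independent set (an arc between two would contradict one being a source SCC) is a clean standalone fact, and your bound combined with \cref{lem:alpha_le_obn_le_ec+1} even gives the exact value $\obn(H) = \alpha(H) = n + \alpha(G)$, which is slightly more information than the paper extracts. The reduction itself (add $n$ isolated vertices, target $b = n + k$) and the forward direction are identical; the difference is only in how the converse is established, and your version is arguably cleaner since it avoids the WLOG reordering of the shortest sequence and the contradiction setup.
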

\begin{proof}
Let $(G,k)$ be an instance of \textsc{Independent Set}, where $G \in \mathcal{G}$ and $|V(G)| = n$.
Let $H$ be the graph obtained from $G$ by adding a set $I$ of $n$ isolated vertices.
Since $\mathcal{G}$ is closed under additions of isolated vertices, $H \in \mathcal{G}$ holds.
To prove the theorem,
it suffices to show that $(H, k+n)$ is a yes-instance of OBN if and only if $(G,k)$ is a yes-instance of \textsc{Independent Set}.

If $G$ has an independent set $S$ of size $k$,
then an orientation of $H$ that makes all vertices in the independent set $S \cup I$ in-degree $0$ shows that $\bn(H) \ge k + n$.

Conversely, assume that $(H, k+n)$ is a yes-instance of OBN
and there is an orientation $\ori{H}$ of $H$ such that $\bn(\ori{H}) \ge k+n$.
Let $\sigma = \langle w_{0}, \dots, w_{b-1} \rangle$ be a shortest burning sequence of $\ori{H}$, where $b \ge k+n$.
Without loss of generality, we may assume that $\{w_{0}, w_{1}, \dots, w_{n-1}\} = I$
and $\{w_{n}, w_{n+1}, \dots, w_{b-1}\} \subseteq V(G)$.
Let $S = \{w_{n}, w_{n+1}, \dots, w_{b-1}\}$.
It suffices to show that $S$ is an independent set of $G$ as $|S| = b - n \ge k$.
Suppose to the contrary that $G$ has an edge between vertices $w_{p}, w_{q} \in S$.
By symmetry, we may assume that $(w_{p}, w_{q}) \in A(\ori{H})$.
Let $\sigma' = \langle w'_{0}, \dots, w'_{b-2} \rangle$ be the sequence obtained from $\sigma$ by skipping $w_{q}$.
That is, $\sigma'$ is defined as follows:
\[
  w'_{i} =
  \begin{cases}
    w_{i}   & 0 \le i \le q-1, \\
    w_{i+1} & q \le i \le b-2.
  \end{cases}
\]
We show that $\sigma'$ is a burning sequence of $\ori{H}$, which contradicts that $\sigma$ is a shortest one.
For $0 \le i \le q-1$, we have $w'_{i} = w_{i}$, and thus $N^{+}_{i}[w'_{i}] = N^{+}_{i}[w_{i}]$.

Let us consider the case $q \le i \le b-2$, where $w'_{i} = w_{i+1}$.
Since $i \ge q \ge n$ and a longest directed path in $\ori{H}$ has length less than $n$,
we have $N^{+}_{i}[w'_{i}] = N^{+}_{i}[w_{i+1}] = N^{+}_{i+1}[w_{i+1}]$.

Furthermore, since $p, q \ge n$ and $(w_{p}, w_{q}) \in A(\ori{H})$,
it holds that $N^{+}_{q}[w_{q}] \subseteq N^{+}_{q+1}[w_{p}] = N^{+}_{p}[w_{p}]$.
Hence, we can conclude that
\begin{align*}
  \bigcup_{0 \le i \le b-1} N^{+}_{i}[w_{i}]
  &=
  \left(\bigcup_{0 \le i \le q-1} N^{+}_{i}[w_{i}]\right)
  \cup
  N^{+}_{q}[w_{q}]
  \cup
  \left(\bigcup_{q+1 \le i \le b-1} N^{+}_{i}[w_{i}]\right)
  \\
  &=
  \left(\bigcup_{0 \le i \le q-1} N^{+}_{i}[w_{i}]\right)
  \cup
  \left(\bigcup_{q+1 \le i \le b-1} N^{+}_{i}[w_{i}]\right)
  \\
  &\subseteq
  \bigcup_{0 \le i \le b-2} N^{+}_{i}[w'_{i}].
\end{align*}
This implies that $\sigma'$ is a burning sequence of $\ori{H}$.
\end{proof}

As an application of \cref{thm:NP-hardness}, we can show the following corollary.
(Recall that \textsc{Independent Set} is NP-complete on planar graphs of maximum degree~$3$~\cite{GareyJ77rst}.)
\begin{corollary}
\label{cor:NP-hard_cubic-planar}
\textsc{Orientable Burning Number} is NP-hard on planar graphs of maximum degree~$3$.
\end{corollary}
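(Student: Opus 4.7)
The plan is to invoke \cref{thm:NP-hardness} directly: the corollary follows once we verify its two hypotheses for the class $\mathcal{G}$ of planar graphs of maximum degree~$3$.

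First I would check that $\mathcal{G}$ is closed under additions of isolated vertices. This is immediate: adding a vertex with no incident edges preserves planarity (an isolated vertex can be placed anywhere in the plane embedding) and does not increase the maximum degree of any existing vertex, while the new vertex has degree~$0 \le 3$. Second, I would recall that \textsc{Independent Set} is NP-complete on planar graphs of maximum degree~$3$ by the classical result of Garey and Johnson~\cite{GareyJ77rst}, which is already cited in the paragraph preceding the corollary.

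With both conditions verified, \cref{thm:NP-hardness} applies verbatim and yields the NP-hardness of \textsc{Orientable Burning Number} on $\mathcal{G}$. Since the reduction in the proof of \cref{thm:NP-hardness} merely adds $n$ isolated vertices to the input graph, it is trivially a polynomial-time reduction, so no additional verification of efficiency is needed.

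I do not anticipate any real obstacle here; the work has been done in \cref{thm:NP-hardness}, and the corollary is essentially a one-line application. The only thing worth making explicit is the closure under isolated-vertex additions, which is the hypothesis specifically tailored to make such corollaries immediate.
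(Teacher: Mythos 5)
Your proposal is correct and matches the paper's approach exactly: the paper presents the corollary as an immediate application of \cref{thm:NP-hardness}, relying on the NP-completeness of \textsc{Independent Set} on planar graphs of maximum degree~$3$~\cite{GareyJ77rst} and the (implicit) closure of that class under adding isolated vertices. You simply spell out the closure check that the paper leaves tacit, which is a reasonable thing to make explicit.
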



\section{Structural parameterizations}
\label{sec:parameterizations}

In this section, we consider some structural parameterizations of \textsc{Orientable Burning Number}.
Given \cref{thm:W[1]-hardness}, which shows that OBN is intractable when parameterized by the target burning number $b$,
it is natural to consider the problem parameterized by some structural parameters of the input graph.

\subsection{Parameterizations combined with $b$}
The first observation is that some sparseness parameters combined with $b$ make the problem tractable.
\begin{observation}
\label{obs:sparse}
Let $\mathcal{G}$ be a class of graphs with a constant $c_{\mathcal{G}} > 1$ such that
every graph $G \in \mathcal{G}$ satisfies $\alpha(G) \ge |V(G)| / c_{\mathcal{G}}$.
When parameterized by $c_{\mathcal{G}}$ plus the target burning number $b$,
\textsc{Orientable Burning Number} on $\mathcal{G}$ is fixed-parameter tractable.
\end{observation}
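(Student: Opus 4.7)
The plan is to combine the lower bound $\obn(G) \ge \alpha(G)$ from \cref{lem:alpha_le_obn_le_ec+1} with the density assumption on $\mathcal{G}$ to obtain a trivial kernelization. For any $G \in \mathcal{G}$ with $n := |V(G)|$, the class hypothesis yields
\[
  \obn(G) \;\ge\; \alpha(G) \;\ge\; n / c_{\mathcal{G}}.
\]
Consequently, any instance $(G, b)$ with $n \ge b \cdot c_{\mathcal{G}}$ is automatically a yes-instance, so the algorithm can answer \emph{yes} in polynomial time in this case.

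In the complementary case $n < b \cdot c_{\mathcal{G}}$, the whole input already has size bounded by a function of $b + c_{\mathcal{G}}$, and it suffices to decide the instance by brute force. Concretely, I would enumerate all $2^{|E(G)|} \le 2^{\binom{n}{2}}$ orientations of $G$, and for each orientation $\ori{G}$ test whether $\bn(\ori{G}) \ge b$ by enumerating all $n^{b-1}$ candidate sequences of length $b-1$ and checking whether any of them is a burning sequence of $\ori{G}$. Both enumerations are bounded by a computable function of $b + c_{\mathcal{G}}$, so the overall running time is of the form $f(b + c_{\mathcal{G}}) \cdot \mathrm{poly}(n)$, which is exactly the shape required for fixed-parameter tractability.

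There is essentially no technical obstacle: the crucial point is that the bound $\alpha(G) \ge n / c_{\mathcal{G}}$ is used as a purely combinatorial fact guaranteed by membership in $\mathcal{G}$, so the algorithm never needs to compute $\alpha(G)$ itself (which would in general be \textsf{NP}-hard). The only minor detail worth highlighting in the write-up is that the case distinction on $n \ge b \cdot c_{\mathcal{G}}$ can be performed in constant time once $n$ and $b$ are read, ensuring the whole procedure stays within the claimed running time.
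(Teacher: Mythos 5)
Your proof is correct and takes essentially the same approach as the paper: both observe that $\obn(G) \ge \alpha(G) \ge n/c_{\mathcal{G}}$ makes any instance with $n \ge b \cdot c_{\mathcal{G}}$ a trivial yes-instance, and otherwise the instance itself is a kernel of size bounded by a function of the parameter. The paper leaves the brute force on the kernel implicit, whereas you spell it out, but the argument is the same.
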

\begin{proof}
Let $G$ be an $n$-vertex graph in $\mathcal{G}$.
If $b \le n/c_{\mathcal{G}}$, then $\alpha(G) \ge b$ and thus $(G,b)$ is a yes-instance of OBN\@.
If $b > n/c_{\mathcal{G}}$, then $n < b \cdot c_{\mathcal{G}}$, and thus $(G,b)$ itself is a kernel with less than $b \cdot c_{\mathcal{G}}$ vertices.
\end{proof}

It is known that $\alpha(G) \ge n/(d+1)$
for every $n$-vertex graph $G$ with average degree $d$~\cite{Caro1979new,Wei1981}.
Thus \cref{obs:sparse} implies the following corollary.
\begin{corollary}
\label{cor:fpt_avgdeg+b}
\textsc{Orientable Burning Number} is fixed-parameter tractable parameterized by $b$ plus average degree.
\end{corollary}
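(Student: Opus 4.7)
The plan is to derive the corollary as a direct application of \cref{obs:sparse}, using the Caro--Wei bound to supply the required linear lower bound on $\alpha(G)$ in terms of $|V(G)|$.

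First I would fix an integer $d \ge 0$ and consider the class $\mathcal{G}_{d}$ of graphs whose average degree is at most $d$. The Caro--Wei bound asserts that every $n$-vertex graph $G$ of average degree $d(G)$ satisfies $\alpha(G) \ge n/(d(G)+1)$. Restricting to $G \in \mathcal{G}_{d}$, this yields $\alpha(G) \ge n/(d+1)$, so the hypothesis of \cref{obs:sparse} is met with $c_{\mathcal{G}_{d}} = d+1$.

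Next I would simply plug this into \cref{obs:sparse}: on input $(G,b)$ with $G \in \mathcal{G}_{d}$, if $b \le n/(d+1)$ then $\alpha(G) \ge b$, so by \cref{lem:alpha_le_obn_le_ec+1} the instance is a trivial yes-instance; otherwise $n < b(d+1)$, giving a kernel whose size depends only on the parameters $b$ and $d$. Since the parameter in the statement is $b$ plus the average degree of the \emph{input} graph, one does not even need to guess $d$ in advance: compute $d = 2|E(G)|/|V(G)|$ directly from $G$ in linear time and run the above dichotomy with $c = \lceil d \rceil + 1$. The result follows by solving the kernel in any finite time depending only on $b$ and $d$ (for instance by brute force over orientations and burning sequences).

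There is essentially no obstacle here; the only care needed is to observe that the Caro--Wei bound depends on the average degree $d(G)$ of $G$ itself, not on any externally imposed class bound, so the parameterization by the input's average degree automatically puts us in a class where \cref{obs:sparse} applies.
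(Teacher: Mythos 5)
Your proposal is correct and follows essentially the same route as the paper: the paper simply notes the Caro--Wei bound $\alpha(G) \ge n/(d+1)$ and then invokes \cref{obs:sparse} to conclude. Your added remark that one can compute the average degree directly from the input rather than guessing $d$, and the minor ceiling adjustment to handle non-integer average degree, are sensible but do not change the argument.
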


\cref{cor:fpt_avgdeg+b} implies that OBN is fixed-parameter tractable parameterized by $b + \text{treewidth}$,
and OBN on planar graphs is fixed-parameter tractable parameterized by $b$.
Recall that OBN is NP-hard on planar graphs even if the maximum degree is~$3$ (\cref{cor:NP-hard_cubic-planar}).
On the other hand, the parameterized complexity of OBN parameterized solely by treewidth remains unsettled.

\subsection{Parameterizations without $b$}

Now we consider structural parameterizations \emph{not} combined with $b$.
As the first step in this direction, we consider parameters less general than treewidth such as vertex cover number.
In some sense, vertex cover number is one of the most restricted parameters that is always larger than or equal to treewidth (see, e.g., \cite{GimaHKKO22}).

We show that OBN parameterized solely by vertex cover number is fixed-parameter tractable.
Our proof is actually for a slightly more general case,
where the parameter is cluster vertex deletion number plus clique number.
In the rest of this section, we prove the following theorem.
\begin{theorem}
\label{thm:cvd+omega}
\textsc{Orientable Burning Number} is fixed-parameter tractable parameterized by cluster vertex deletion number plus clique number.
\end{theorem}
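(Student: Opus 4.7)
The plan is to reduce OBN to a bounded-dimensional integer linear program via a type-based symmetrisation. First, compute a cluster vertex deletion set $S$ with $|S| \le k := \cvd(G)$ in FPT time; each connected component of $G - S$ is then a clique of size at most $\omega := \omega(G)$. Classify the clusters of $G - S$ by their \emph{input type}, which records $|C|$ together with the multiset $\{\!\{N_G(v) \cap S : v \in C\}\!\}$. The number of input types is $T \le 2^{O(k\omega)}$, and for each input type $\tau$ there are at most $P_\tau \le 2^{O((k+\omega)\omega)}$ possible \emph{oriented profiles}: orientations of the internal edges of a cluster of type $\tau$ together with its boundary edges to $S$.

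Next, enumerate all $2^{O(k^2)}$ orientations of $G[S]$. Since clusters of the same input type are pairwise isomorphic as labelled subgraphs of $G$, an orientation of $G$ (with $G[S]$ fixed) is determined, up to graph isomorphism, by a nonnegative integer vector $\mathbf{x} = (x_{\tau,\pi})$ satisfying $\sum_\pi x_{\tau,\pi} = n_\tau$ for each $\tau$, where $n_\tau$ is the number of clusters of input type $\tau$. In particular $\bn(\ori{G})$ depends only on $\mathbf{x}$ and on the chosen orientation of $G[S]$, so OBN becomes the maximisation of $\bn(\ori{G})$ over these choices.

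To evaluate the inner $\bn$, encode a minimum burning sequence by a \emph{role pattern}: a sequence whose $i$-th entry specifies either a vertex of $S$ or a position within some $(\tau,\pi)$-cluster, together with the associated radius $i$. By \cref{lem:obn_le_cc+2}, any burning sequence has length at most $k + |\mathcal{C}| + 2$, and by \cref{prop:tournament-radius}, a fire of radius $\ge 2$ placed at a king of an oriented clique burns the entire clique. Consequently, almost every role contributes exactly one fire per cluster, while only $O(k)$ exceptional low-radius or $S$-centred roles handle boundary interactions between $S$ and the clusters. The condition ``$\bn(\ori{G}) \ge b$'' then becomes the solvability of an ILP in $f(k,\omega)$ variables—namely the $x_{\tau,\pi}$ augmented by a bounded number of auxiliary variables counting how many clusters of each $(\tau,\pi)$ play each role—with linear coverage constraints. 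Solving via Lenstra's algorithm yields $\obn(G)$ in FPT time.

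The main obstacle is the role-pattern linearisation: showing that every minimum burning sequence of $\ori{G}$ can be normalised to a role pattern whose length and coverage depend on $\mathbf{x}$ only via linear counts of the $(\tau,\pi)$ buckets. This rests on a symmetrisation argument that transports fires among same-type clusters along their labelled isomorphism, together with a case analysis of fires in $S$ that simultaneously burn several clusters of differing profiles; once $k$ and $\omega$ are fixed, only finitely many such ``global'' interaction patterns exist, so each can be enumerated and encoded as an ILP constraint linear in $\mathbf{x}$.
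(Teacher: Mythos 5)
Your plan (type-based symmetrisation, encoding orientations by counts $x_{\tau,\pi}$, and Lenstra's ILP algorithm) is a genuinely different route from the paper's, which instead bounds the treewidth of $G$ by $\cvd(G)+\omega(G)$, shows that the number of ``unused parts'' in a normalised burning sequence is bounded by a function of $k=\cvd(G)+\omega(G)$, and then expresses the problem as an MSO$_2$ formula of length $f(k)$ to invoke Courcelle-type model checking. Several of your ingredients coincide with the paper's (kings of cluster tournaments, $\obn\le\cc+2$, and the observation that only an $f(k,\omega)$-sized exceptional region must be handled specially). However, there is a genuine gap that your sketch does not resolve: the predicate $\bn(\ori G)\ge b$ is a \emph{universally} quantified statement — \emph{no} length-$(b-1)$ burning sequence covers $\ori G$ — whereas your role-pattern device with auxiliary counting variables $y_{\tau,\pi}$ is a natural \emph{witness} of the existence of a covering sequence, i.e.\ of $\bn(\ori G)\le b-1$. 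Asserting that ``$\bn(\ori G)\ge b$ becomes the solvability of an ILP'' silently inverts the quantifier. What you would actually need is to show that, after fixing the orientation of $G[S]$ and the support of $\mathbf{x}$, the set of vectors $\mathbf{x}$ admitting a length-$(b-1)$ burning sequence is a union of $g(k,\omega)$ polyhedra, and then to express ``$\mathbf{x}$ avoids all of them'' by distributing the negations into a bounded disjunction of ILP feasibility tests. This negation step is the crux, and it is precisely what the paper's MSO$_2$ encoding handles for free (MSO formulas are closed under negation; the paper simply writes $\phi_1=\lnot\phi_2$).

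A second issue you gesture at but do not address: whether a large fire placed at the king of a $(\tau',\pi')$-cluster can burn a $(\tau,\pi)$-cluster may require a directed path through $S$ \emph{and through intermediate clusters}, so the reachability relations you want to treat as fixed coefficients of the ILP depend on which coordinates of $\mathbf{x}$ are nonzero. Before the coverage constraints become linear you must also enumerate the support of $\mathbf{x}$ (which is fine, as there are only $2^{T\cdot P}$ supports with $T,P$ bounded in $k,\omega$). Finally, your ``role pattern'' as literally defined — one entry per fire, with the radius attached — has length $\Theta(p)$, unbounded in the parameter; the compression to $O(k)$ exceptional low-radius roles plus anonymous large fires is stated as an intention but not proved, and it is essentially the content of the paper's Claims on longest paths and goodness of burning sequences. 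With all three repairs the plan seems viable, but as written it does not constitute a proof.
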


\cref{thm:cvd+omega,obs:paras_ineq} imply the fixed-parameter tractability parameterized by vertex cover number.
\begin{corollary}
\label{cor:vc}
\textsc{Orientable Burning Number} is fixed-parameter tractable parameterized by vertex cover number.
\end{corollary}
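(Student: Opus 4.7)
The plan is to derive this corollary directly from \cref{thm:cvd+omega} by showing that the combined parameter $\cvd(G) + \omega(G)$ used there is bounded by a linear function of the vertex cover number. Concretely, I will invoke the algorithm of \cref{thm:cvd+omega} on the input graph $G$, and argue that its running time $f(\cvd(G) + \omega(G)) \cdot n^{O(1)}$ is already of the form $f'(\vc(G)) \cdot n^{O(1)}$ for some computable function $f'$.

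The key ingredient is the second inequality of \cref{obs:paras_ineq}, namely $\cvd(G) + \omega(G) \le 2\vc(G) + 1$. This is already available in the paper, so no further structural work is needed. I would simply set $f'(k) := f(2k+1)$, which is computable whenever $f$ is. Plugging this into the FPT running time of \cref{thm:cvd+omega} gives a running time of $f(2\vc(G) + 1) \cdot n^{O(1)} = f'(\vc(G)) \cdot n^{O(1)}$, establishing fixed-parameter tractability with respect to $\vc(G)$.

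One small point worth mentioning is that the FPT framework does not require us to \emph{know} $\vc(G)$ in advance: we can either compute a vertex cover (and hence an upper bound on $\vc(G)$) in FPT time via the standard branching algorithm, or simply run the algorithm from \cref{thm:cvd+omega} which itself presumably computes or uses the cluster vertex deletion set and clique number internally. In either case the resulting algorithm is fully uniform in $\vc(G)$. Since the entire argument consists of composing a known inequality with a known FPT algorithm, there is no real obstacle; the proof is essentially a one-line observation and I expect it to occupy only a sentence or two in the final paper.
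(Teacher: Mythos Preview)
Your proposal is correct and matches the paper's own approach exactly: the paper simply states that \cref{thm:cvd+omega} together with \cref{obs:paras_ineq} (specifically $\cvd(G)+\omega(G)\le 2\vc(G)+1$) yields the corollary. Your extra remarks about not needing to know $\vc(G)$ in advance are harmless elaboration but unnecessary for the argument.
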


\subsubsection{Proof of \cref{thm:cvd+omega}.}
In the proof, we use the theory of monadic second-order logic on graphs (MSO$_{2}$), which will be introduced right before we use it.
If we allow an MSO$_{2}$ formula to have length depending on $b$, it is not difficult to express OBN\@.
However, this only implies the fixed-parameter tractability of OBN parameterized by a parameter combined with $b$.
To avoid the dependency on $b$, we have to bound the length of an MSO$_{2}$ formula with a function not depending on $b$.
To this end, we make a series of observations to bound the number of \emph{parts} not used in a \emph{good} burning sequence,
then represent the problem by expressing the unused parts instead of the used parts.

\paragraph{Useful observations.}
In the following, let $(G, b)$ be an instance of OBN\@.
Let $\omega$ be the clique number of $G$; that is, $\omega = \omega(G)$.
Let $S$ be a cluster vertex deletion set of $G$ with size $s = \cvd(G)$.
Our parameter is $k \coloneqq \omega + s$.
Note that finding $S$ is fixed-parameter tractable parameterized by $s$~\cite{HuffnerKMN10}, and thus by $k$ as well.
Let $C_{1}, \dots, C_{p}$ be the connected components of $G - S$, which are complete graphs.
When we are dealing with an orientation $\ori{G}$ of $G$, we sometimes mean by $C_{i}$ the tournament $\ori{G}[V(C_{i})]$.
For example, we may say ``a king of $C_{i}$.''

\begin{myclaim}
\label{clm:small-b}
If $b \le p$, then $(G,b)$ is a yes-instance.
\end{myclaim}
\begin{proof}
By picking arbitrary one vertex from each $C_{i}$, we can construct an independent set of size $p$.
By \cref{lem:alpha_le_obn_le_ec+1}, $\obn(G) \ge \alpha(G) \ge p \ge b$.
\end{proof}

\begin{myclaim}
\label{clm:large-b}
If $b > p + s + 2$, then $(G,b)$ is a no-instance.
\end{myclaim}
\begin{proof}
Let $\ori{G}$ be an orientation of $G$.
It suffices to show that $\bn(\ori{G}) \le p+s+2$.
For each $C_{i}$, we place a fire of radius at least~$2$ at a king of $C_{i}$.
For each vertex in $S$, we place a fire of arbitrary radius.
If we have not used the fires of radii~$0$ and $1$,
then we place them at arbitrary vertices.
\end{proof}

By \cref{clm:small-b,clm:large-b}, we may assume that
\[
  p < b \le p + s +2.
\]

Let $\ell$ be the length of a longest path in $G$.
We assume that $\ell \ge 1$ since otherwise $G$ cannot have any edge and the problem becomes trivial.
Note that in every orientation $\ori{G}$ or $G$,
the length of a longest directed path is at most $\ell$.
\begin{myclaim}
\label{clm:longest-path}
$\ell \le s\omega + s + \omega -1$.
\end{myclaim}
\begin{proof}
Let $P$ be a longest path in $G$.
Since $P$ can visit at most $|S| + 1$ connected components of $G-S$,
we have $|V(P)| \le |S| + (|S|+1) \omega$ as each $C_{i}$ is a complete graph.
The claim follows as $|S| = s$ and $|E(P)| = |V(P)|-1$.
\end{proof}

In a burning sequence of an orientation of $G$,
we call a fire of radius at least $\ell$ a \emph{large fire}.
Note that a large fire $w$ burns all vertices that can be reached from $w$ in the orientation
as no directed path in the orientation is longer than~$\ell$.

In the following, we focus on burning sequences of length $b-1$
since we are going to express the \emph{non-existence} of such sequences.
Let $L = \max\{0, b-1 - \ell\}$;
that is $L$ is the number of large fires in a sequence of length $b-1$.
Observe that $L \le p+s$ as $b-1-\ell \le b-2 \le p+s$.

A burning sequence of an orientation of $G$ is \emph{good}
if the following conditions are satisfied:
\begin{enumerate}
  \item two large fires do not have the same position;
  \item each $C_{j}$ contains at most one large fire;
  \item if a large fire is placed in some $C_{h}$, then it is placed at a king of $C_{h}$.
\end{enumerate}

\begin{myclaim}
\label{clm:good}
Let $\ori{G}$ be an orientation of $G$.
If $\ori{G}$ admits a burning sequence with length $b-1$,
then there is a good burning sequence of $\ori{G}$ with the same length.
\end{myclaim}
\begin{proof}
From a burning sequence $\sigma$ of $\ori{G}$ with length $b-1$,
we first construct a sequence $\sigma_{1}$ that satisfies the first condition of the goodness.
We repeatedly find two large fires placed at the same vertex
and then replace arbitrary one of them with another vertex not occupied by any large fire.
The replacement is possible as $L$ is not larger than the number of vertices.
Since two large fires placed at the same vertex burn the same set of vertices,
the obtained sequence is still a burning sequence of $\ori{G}$.
When there is no pair of large fires occupying the same vertex, we stop this phase and call the resultant sequence~$\sigma_{1}$.

Next we modify $\sigma_{1}$ to obtain a sequence $\sigma_{2}$ that satisfies the first and second conditions.
Assume that two large fires $w_{i}$ and $w_{j}$ are placed in the same connected component $C_{h}$ of $G-S$
and that $(w_{i}, w_{j}) \in A(\ori{G})$. (Recall that $C_{h}$ is a complete graph.)
Since $w_{i}$ is a large fire, it burns every vertex that can be reached from $w_{i}$.
In particular, $w_{i}$ burns all vertices reachable from $w_{j}$.
Hence, $w_{j}$ is useless for burning the graph.
We replace $w_{j}$ with another vertex $v$ such that
$v$ is not occupied by any large fire
and
if $v$ belongs to some $C_{h'}$, then there is no large fire belonging to $C_{h'}$ prior to the replacement.
This is always possible as $L \le p+s$.
We exhaustively apply this replacement procedure and get $\sigma_{2}$,
which satisfies the first and second conditions of the goodness.

Finally, we obtain a sequence $\sigma_{3}$ from $\sigma_{2}$
by replacing each large fire that is placed in some $C_{h}$ with a king of $C_{h}$.
We can see that $\sigma_{3}$ is a burning sequence of $\ori{G}$
since the new large fire placed at a king of $C_{h}$ burns all vertices reachable form the king and the king can reach all vertices in $C_{h}$.
Since $\sigma_{3}$ satisfies all conditions of goodness and has the same length as $\sigma$, the claim holds.
\end{proof}

If $\sigma$ is a good burning sequence of an orientation $\ori{G}$ of $G$ with length $b-1$,
then the sum of the following two numbers is $p+s-L$:
\begin{itemize}
  \item the number of vertices in $S$ not occupied by the large fires of $\sigma$, and
  \item the number of connected components of $G-S$ not including large fires of~$\sigma$.
\end{itemize}
Since $L = \max\{0, b-1-\ell\}$ and $p < b$, it holds that $p+s-L < s+\ell+1$.
Since $\ell$ can be bounded from above by a function of $k = s + \omega$, so is $p+s-L$.
Thus our MSO$_{2}$ formula can have length depending on $p+s-L$, the number of \emph{unused parts}.

\paragraph{MSO$_{2}$ expressions.}
We now express the problem in the monadic second-order logic on graphs.
A formula in the monadic second-order logic on graphs, denoted MSO$_{2}$,
can have variables representing vertices, vertex sets, edges, and edge sets.
As atomic formulas, we can use
the equality $x = y$, the inclusion $x \in X$,
the adjacency relation $\mathtt{adj}(x, y)$ meaning that vertices $x$ and $y$ are adjacent, and
the incidence relation $\mathtt{inc}(e, x)$ meaning that a vertex $x$ is an endpoint of an edge $e$.
Atomic formulas can be recursively combined by the usual Boolean connectives $\lnot$, $\land$, $\lor$, $\Rightarrow$, and $\Leftarrow$ to form an MSO$_{2}$ formula.
Furthermore, variables in an MSO$_{2}$ formula can be quantified by $\exists$ and $\forall$.
If an MSO$_{2}$ formula $\phi(X)$ with one free (vertex-set or edge-set) variable $X$ is evaluated to true
on a graph $G$ and a subset $S$ of $V(G)$ or $E(G)$, we write $G \models \phi(S)$.
It is known that, given an MSO$_{2}$ formula $\phi(X)$, a graph $G$, and a subset $S$ of $V(G)$ or $E(G)$,
the problem of deciding whether $G \models \phi(S)$ is fixed-parameter tractable parameterized
by the length of $\phi$ plus the treewidth of $G$~\cite{ArnborgLS91,BoriePT92,Courcelle90mso1}.

In the following, we construct an MSO$_{2}$ formula $\phi(X)$ such that $G \models \phi(S)$
if and only if $(G,b)$ is a yes-instance of OBN, $S$ is a minimum cluster vertex deletion set,
and the following assumptions, which we justified above, are satisfied:
\begin{itemize}
  \item $p < b \le p + s +2$;
  \item the sum of the numbers of unused vertices in $S$ and unused connected components of $G-S$ is $p+s-L$,
\end{itemize}
where
$s = \cvd(G) = |S|$, $\omega = \omega(G)$,
$p$ is the number of connected components in $G - S$, and
$L$ is the number of large fires in a sequence of length $b-1$.
Also, we show that the length of $\phi(X)$ is bounded from above by a function of $k = s + \omega$.
Since the treewidth of $G$ is at most $s + \omega$ (see \cite{GimaHKKO22}), this implies \cref{thm:cvd+omega}.

The formula $\phi(X)$ asks whether there exists an orientation $\ori{G}$ of $G$ such that
no sequence $\langle w_{0}, \dots, w_{b-2} \rangle$ of length $b-1$ is a good burning sequence of~$\ori{G}$.
In MSO$_{2}$, we can handle orientations of $k$-colorable graphs  by first assuming a \emph{default} orientation using a $k$-coloring and
then represent the \emph{reversed} edges by a set of edges~\cite{Courcelle95,JaffkeBHT17}.\footnote{%
There is another way for handling orientation by using a variant of MSO$_{2}$ defined for directed graphs,
where we can fix an arbitrary orientation first (without using a $k$-coloring) and then represent reversed edges by an edge set. See e.g., \cite{EggemannN12}.}
More precisely, such a formula first expresses a $k$-coloring and a set of reversed edge,
and then it considers each edge as oriented from the vertex with a smaller label to the one with a larger label
if and only if the edge is not a reversed one.
Note that \cref{obs:paras_ineq} implies that $G$ is $k$-colorable.
We use this technique and thus $\phi(X)$ has the following form:
\[
  \phi(X) = \exists V_{1}, \dots, V_{k} \subseteq V, \, \exists F \subseteq E \colon
  \texttt{proper-coloring}(V_{1}, \dots, V_{k}) \land \phi_{1},
\]
where $\texttt{proper-coloring}(V_{1}, \dots, V_{k})$ expresses that $V_{1}, \dots, V_{k}$ is a proper $k$-coloring of $G$.
The formula $\texttt{proper-coloring}(V_{1}, \dots, V_{k})$ can be defined as
\[
  \texttt{proper-coloring}(V_{1}, \dots, V_{k})
  =
  \mathtt{part}(V_{1}, \dots, V_{k})
  \land
  \bigvee_{1 \le i \le k}
  \mathtt{ind}(V_{i}),
\]
where $\mathtt{part}(V_{1}, \dots, V_{k})$ and $\mathtt{ind}(V_{i})$ are defined as follows
\begin{align*}
  \mathtt{part}(V_{1}, \dots, V_{k})
  &= \forall v \in V \colon
  \bigvee_{1 \le i \le k} (v \in V_{i})
  \land
  \bigwedge_{1 \le i < j  \le k} \lnot (v \in V_{i} \land v \in V_{j}),
  \\
  \mathtt{ind}(V_{i})
  &=
  \forall u, v \in V_{i}\colon \lnot \mathtt{adj}(u,v).
\end{align*}

The subformula~$\phi_{1}$ can use the formula $\mathtt{arc}(u,v)$ expressing that
there is an arc from $u$ to $v$ in the orientation defined by $V_{1}, \dots, V_{k}$ and $F$,
which can be defined as follows:
\begin{align*}
  \mathtt{arc}(u,v)
  ={}& \mathtt{adj}(u,v) \land
   \big(((u < v) \land \lnot \mathtt{rev}(u,v))
    \lor (\lnot (u < v) \land  \mathtt{rev}(u,v))\big),
\end{align*}
where $(u < v)$ and $\mathtt{rev}(u,v)$ are defined as
\begin{align*}
  (u < v) &= \bigvee_{1 \le i < j \le k} (u \in V_{i} \land v \in V_{j}),\\
  \texttt{rev}(u,v) &= \exists e \in F \colon \mathtt{inc}(e,u) \land \mathtt{inc}(e,v).
\end{align*}

Given an orientation defined above, the subformula $\phi_{1}$ expresses that
there is no good burning sequence of length $b-1$ for this orientation.
Indeed, we set $\phi_{1} = \lnot \phi_{2}$ and give a definition of $\phi_{2}$ that expresses there is a good burning sequence of length $b-1$.
We assume that $b-1 > \ell$ since the other case can be easily obtained from the expression of this case.
The subformula $\phi_{2}$ has the following form
\begin{align*}
  \phi_{2} &=
  \exists w_{0}, \dots, w_{\ell-1} \in V, \,
  \exists u_{1}, \dots, u_{p+s-L} \in V \colon \ \phi_{3} \land {} \\
  &  \bigwedge_{1 \le i < j \le p+s-L} (u_{i} \ne u_{j})
   \land \bigwedge_{1 \le i < j \le p+s-L} ((u_{i} \notin X) \land (u_{j} \notin X) \Rightarrow \lnot \mathtt{adj}(u_{i}, u_{j}))
\end{align*}
where $w_{0}, \dots, w_{\ell-1}$ simply correspond to the first $\ell$ fires in a (good) burning sequence
and $u_{1}, \dots, u_{p+s-L}$ correspond to the representatives of unused parts.
More precisely, if $u_{i} \in X$, then it means that $u_{i}$ is not used by any large fire;
if $u_{i} \notin X$ and thus $u_{i}$ belongs to some connected component $C$ of $G-X$,
then it means that no vertex in $C$ is used by large fires.
Note that the second line of the formula forces that $u_{1}, \dots, u_{p+s-L}$ are distinct and not chosen multiple times from one connected component of $G - X$.
(Recall that $X$ is promised to be a cluster vertex deletion set.)
Now $\phi_{3}$ expresses that every vertex is burned. Hence, it can be expressed as follows
\[
  \phi_{3} = \forall v \in V \colon \mathtt{burned}(v),
\]
where the definition of $\mathtt{burned}(v)$ is given below.

To define $\mathtt{burned}(v)$, observe that $v$ is burned if and only if one of the following conditions is satisfied:
\begin{enumerate}
  \item some $w_{i}$ ($0 \le i \le \ell-1$) has a directed path of length at most $i$ to $v$;
  \item some large fire has a directed path to $v$.
\end{enumerate}
We express the first case as $\texttt{burned-small}(v)$ and the second as $\texttt{burned-large}(v)$,
and thus $\mathtt{burned}(v) = \texttt{burned-small}(v) \lor \texttt{burned-large}(v)$.
The first case is easy to state as
\[
  \texttt{burned-small}(v) = \bigvee_{0\le i \le \ell-1} \mathtt{reachable}_{i}(w_{i}, v),
\]
where $\mathtt{reachable}_{d}(x, y)$ means that there is a directed path of length at most $d$ from $x$ to $y$,
which can be defined as
\begin{align*}
  \mathtt{reachable}_{d}(x, y) =
  \exists z_{0}, \dots, z_{d} \in V & \colon (z_{0} = x) \land (z_{d} = y) \land{} \\
  & \bigwedge_{0 \le j \le d-1} ((z_{j} = z_{j+1}) \lor \mathtt{arc}(z_{j}, z_{j+1})).
\end{align*}
On the other hand, the second case is a bit tricky as the large fires are not explicitly handled.
Recall that the vertices $u_{1}, \dots, u_{p+s-L}$ tell us which vertices in $X$ are not large fires
and which connected components of $G-X$ include no large fires.
From this information, we can determine whether a vertex $x$ is used as a large fire by setting $\texttt{large-fire}(x) = \lnot \mathtt{unused}(x)$,
where $\mathtt{unused}(x)$ is defined as
\begin{align*}
  \mathtt{unused}(x) ={}
  &\bigvee_{1 \le i \le p+s-L} (x = u_{i}) \\
  &{}\lor
  \left((x \notin X) \land \bigvee_{1 \le i \le p+s-L} ((u_{i} \notin X) \land \mathtt{adj}(x, u_{i}))\right).
\end{align*}
Note that the correctness of the second line depends on the assumption that each connected component of $G-X$ is a complete graph.
Now $\texttt{burned-large}(v)$ can be expressed as follows.
\begin{align*}
  \texttt{burned-large}(v) ={} &
  \exists x \in V \colon \texttt{large-fire}(x) \land  \mathtt{reachable}_{\ell}(x, v).
\end{align*}

The length of the entire formula $\phi(X)$ depends only on $k$, $\ell$, and $p+s-L$,
where $\ell$ and $p+s-L$ can be bounded from above by function depending only on $k$.
Therefore, the length of $\phi(X)$ depends only on $k$.
This completes the proof of \cref{thm:cvd+omega}.


\section{Concluding remarks}

In this paper, we initiated the study of \textsc{Orientable Burning Number} (OBN), which is the problem of finding an orientation of a graph that maximizes the burning number.
We first observed some graph-theoretic bounds and then showed algorithmic and complexity results.

We showed that OBN is NP-hard even on some classes of sparse graphs (\cref{thm:NP-hardness}).
On the other hand, we do not know whether it belongs to NP\@.
We can see that OBN belongs to $\mathrm{\Sigma}^{\mathrm{P}}_{2}$
since it is an $\exists\forall$-problem that asks for the existence of an orientation of a given graph
such that all short sequences of fires are not burning sequences of the oriented graph (see \cite{Woeginger21} for a friendly introduction to $\mathrm{\Sigma}^{\mathrm{P}}_{2}$).
It would be natural to suspect that the problem is indeed $\mathrm{\Sigma}^{\mathrm{P}}_{2}$-hard.
\begin{question}
Does OBN belong to NP, or is it $\mathrm{\Sigma}^{\mathrm{P}}_{2}$-complete?
\end{question}

In contrast to the NP-hardness of the general case,
we showed that the problem is solvable in polynomial time on bipartite graphs or more generally on {\KE} graphs (\cref{cor:ke-graphs}).
We also showed that for perfect graphs, which form a large superclass of bipartite graphs,
we can compute the orientable burning number with an additive error of~$2$ (\cref{cor:perfect_graphs}).
Given these facts, we would like to ask whether the problem can be solved in polynomial time on perfect graphs or on some of its subclasses such as chordal graphs.
\begin{question}
Is OBN polynomial-time solvable on perfect graphs, or on some of its (non-bipartite) subclasses such as chordal graphs?
\end{question}

In the parameterized setting, we showed that OBN parameterized by the target burning number~$b$ is W[1]-hard in general (\cref{thm:W[1]-hardness}),
while it is fixed-parameter tractable on some sparse graphs such as planar graphs (\cref{cor:fpt_avgdeg+b}).
We then studied the setting where $b$ is not part of the parameter.
In this case, we showed that OBN parameterized solely by vertex cover number (or more generally by cluster vertex deletion number plus clique number)
is fixed-parameter tractable (\cref{thm:cvd+omega}).
It would be interesting to study the complexity of parameterizations by more general parameters,
e.g., vertex integrity~\cite{GimaHKKO22}.
\begin{question}
Is OBN fixed-parameter tractable when parameterized solely by
treewidth, pathwidth, treedepth, vertex integrity, or other related parameters?
\end{question}

Finally, we ask a graph-theoretic question.
Most of the algorithmic and complexity results in this paper directly or indirectly used  the relations
between the orientable burning number and the independence number shown in \cref{sec:bounds}.
As shown there, we have $\alpha(G) \le \obn(G)$ and $\obn(G) \in O(\alpha(G) \cdot \log n)$.
Now the question would be the maximum difference between $\alpha(G)$ and $\obn(G)$.
At this point, we only know that the maximum gap is at least~$2$ as $\obn(K_{n}) = 3 = \alpha(K_{n}) + 2$ for $n \ge 5$.
\begin{question}
Is there a graph $G$ with $\obn(G) > \alpha(G) + 2$?
Is there a function $f$ such that $\obn(G) \le f(\alpha(G))$ for every graph $G$?
\end{question}


%
%
\bibliographystyle{splncs04}
\bibliography{ref}

\end{document}